\DeclareMathOperator{\vol}{Vol}
\title{Exact computation of a manifold metric, via Lipschitz Embeddings and Shortest Paths on a Graph}
\author{
  Timothy Chu$^*$ \\
  CMU\\
  \texttt{timothyzchu@gmail.com}
  \and
  Gary L.\ Miller\thanks{Partially supported by NSF CCF-1637523}\\
  CMU\\
  \texttt{glmiller@cs.cmu.edu} \\
  \and
    \newline
  Donald R. Sheehy\thanks{Partially supported by NSF CCF-1652218}\\
  North Carolina State University \\
  \texttt{don.r.sheehy@gmail.com}
}
\date{}
\newcommand{\eps}{\varepsilon}
\newcommand{\B}{\mathbb{B}}
\newcommand{\len}{\ell}
\newcommand{\R}{\mathbb{R}}
\newcommand{\ourpath}{\mathrm{path}}
\newcommand{\dist}{\mathbf{d}}
\newcommand{\distto}{\mathbf{r}}
\renewcommand{\because}[1]{&\left[\text{\small{#1}}\right]}
\newcommand\prob[2]{\operatorname*{\mathbb{P}}_{#1}\left[ #2 \right]}
\newcommand{\M}{M}
\newtheorem{prop}{Proposition}[section]
\newtheorem{definition}{Definition}[section]
\begin{document}

  \maketitle

  \fancyfoot[R]{\scriptsize{Copyright \textcopyright\ 2020 by SIAM\\
  Unauthorized reproduction of this article is prohibited}}

  \begin{abstract}

  Data-sensitive metrics adapt distances locally based the density of data points with the goal of aligning distances and some notion of similarity.
  In this paper, we give the first exact algorithm for computing a data-sensitive metric called the nearest neighbor metric.
  In fact, we prove the surprising result that a previously published $3$-approximation is an exact algorithm.

  The nearest neighbor metric can be viewed as a special case of a density-based distance used in machine learning, or it can be seen as an example of a manifold metric.
  Previous computational research on such metrics despaired of computing exact distances on account of the apparent difficulty of minimizing over all continuous paths between a pair of points.

  We leverage the exact computation of the nearest neighbor metric to compute sparse spanners and persistent homology.
  We also explore the behavior of the metric built from point sets drawn
from an underlying distribution and consider the more general case of
inputs that are finite collections of path-connected compact sets.

  The main results connect several classical theories such as the conformal change of Riemannian metrics, the theory of positive definite functions of Schoenberg, and screw function theory of Schoenberg and Von Neumann.
  We also develop some novel proof techniques based on the combination of screw functions and Lipschitz extensions that may be of independent interest.

\end{abstract}

  \section{Introduction}

The profound success of nonlinear methods in machine learning such as kernels methods, density-based distances, and neural nets reveals that although data are often represented as points in $\R^n$, the shortest path between two points is \emph{not} a straight line.
It is widely believed that a more useful metric on the data points would
have the property that two points in a dense cluster will be close in some
underlying metric, even if the Euclidean distance is
far~\cite{alamgir12shortest,
cohen15approximating, vincent03, bijral11semiSupLearningDBD}.
That is, distances are scaled inversely according to the density of the data along a path between points.
We call such a metric \textbf{data-sensitive}.

Data-sensitive metrics arise naturally in machine learning, and are
implicitly central in celebrated methods such as $k$-NN graph methods,
manifold learning, level-set methods, single-linkage clustering, and
Euclidean MST-based clustering (see Section~\ref{sec:edge-power} and
Appendix~\ref{ap:clustering-link} for details).
The construction of appropriate data-sensitive metrics is an active area of research.
We consider a simple data-sensitive metric with an underlying manifold structure called the \textbf{nearest neighbor metric}.
This metric was first introduced in~\cite{cohen15approximating}. It and its close variants have been studied in the past by
multiple
researchers~\cite{hwang2016, cohen15approximating, sajama05estimatingDBDM,
bijral11semiSupLearningDBD,vincent03}.
In this paper, we show how to compute the nearest neighbor metric exactly
for any dimension, which solves one of the most important and challenging
problems for any manifold-based metric.

The starting point will be the nearest neighbor function $\distto_P$ for
the data set $P$:
\[
\distto_P(z) = 4\min_{x\in p} \|x-z\|,
\]
where the factor of $4$ normalizes and simplifies expressions later.
This function is also known as the distance function to the set $P$ and is the basic object of study in the critical point theory of distance functions, a generalization of Morse Theory~\cite{grove93critical}.
This theory has found many recent uses in computational geometry~\cite{chazal08smooth,chazal09sampling} as it is a natural way to infer underlying structure from a sample of points.
We have a similar goal of inferring underlying structure when we use $\distto_P$ as a cost function for a density-based distance defined as follows (see also Section~\ref{sec:persistence} for explicit inference results).

\begin{definition}
Given a continuous cost function $c:\R^k \rightarrow \R$, we define the density-based
cost of a path $\gamma$ relative to $c$ as:
\[ \len_c(\gamma) = \int_{0}^1 c(\gamma(t)) \| \gamma'(t) \|dt. \]
Here, the path $\gamma$ is defined as a continuous map $\gamma:[0,1]
\to \R^k$.
Let $\ourpath(a,b)$ denote the set of piecewise-$C_1$ paths from $a$ to $b$.
We then define the \textbf{density-based distance} between two points $a, b \in
\R^k$ as
\[ d_c(a,b) \inf_{\gamma\in\ourpath(a,b)} \len_c(\gamma)\]
\end{definition}

This is a slight simplification of the density-based distances from~\cite{sajama05estimatingDBDM} which included other requirements to facilitate approximation.
Conceptually, the density-based cost of a path is the weighted path length,
where each infinitesimal path piece is weighted according to $c$.  The cost
$c$ is usually some function of an underlying density $f$ (the natural
choice would be $c(x) = f(x)^{-\frac{1}{k}}$).  Density-based distances
have been notable in the machine learning setting for over a
decade~\cite{sajama05estimatingDBDM,bijral11semiSupLearningDBD}.  To build
a data-sensitive metric from density-based distances, we would like a
cost function $c$ that is small when close to the data set, and large when
far away.  The nearest neighbor function $\distto_P$ is the most natural
candidate, and has been traditionally used as a proximity measure between
points and a data set in both the geometry and machine learning
settings~\cite{bijral11semiSupLearningDBD}. It has been used as such in
nearest neighbor (and $k$-NN) classification, $k$-means/medians/center
clustering, finite element methods, and any of the numerous methods that
use Voronoi diagrams or Delaunay triangulation as intermediate data
structures.

\begin{definition} Given any finite set $P\subset \R^k$, the \textbf{nearest neighbor cost} function is $\len_N := \len_{\distto_P}$ and the \textbf{nearest neighbor metric} is $\dist_N := \dist_{\distto_P}$.  That is, it's the density-based distance with cost function $\distto_P$.
\end{definition}
\begin{figure}[htbp]
  \centering
    \includegraphics[width=0.3\textwidth]{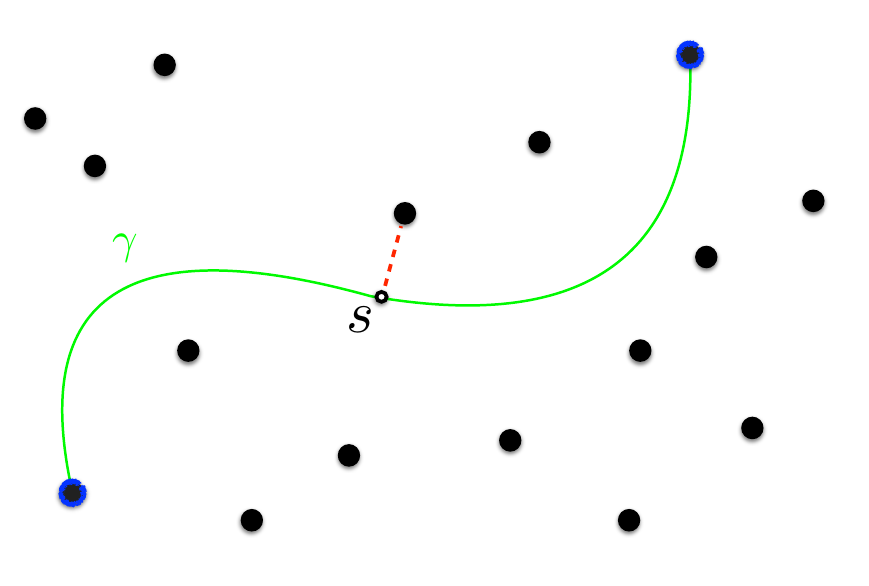}
    \caption{In this figure we have a collection of points.
      The length or cost of the green curve between the two blue points
      is the integral along the curve scaled by the distance to the nearest
point.}
  \label{fig:example}
\end{figure}

The nearest neighbor metric, and density-based distances in general, are
examples of manifold geodesics~\cite{sajama05estimatingDBDM,
tenenbaum00global}.  Manifold geodesics of data sets are
defined by embedding points into a manifold and computing the infimum
length path in the manifold.  Within computer science, dozens of
foundational papers in machine learning and surface reconstruction rely on
manifold-based metrics to perform clustering, classification, regression,
surface reconstruction, persistent homology, and
more~\cite{tenenbaum00global, cohen15approximating, vincent03,
bijral11semiSupLearningDBD,  sajama05estimatingDBDM,
edelsbrunner02topological, alamgir12shortest, vL09}.  Manifold geodesics
predate computer science, and are the cornerstone of many fields of physics
and mathematics.  Exactly computing geodesics is fundamental to countless
areas of physics including: the brachistochrone and minimal-drag-bullet
problem of Bernoulli and Newton~\cite{bernoulli}, exactly determining a particle's
trajectory in classical physics (Hamilton's Principle of Least
Action)~\cite{Courant53}, computing the path of light through a
non-homogeneous medium (Snell's law), finding the evolution of wave
functions in quantum mechanics over time (Feynman path integrals~\cite{Feynman48}), and
determining the path of light in the presence of gravitational fields
(General Relativity, Schwarzschild metric)~\cite{Schwarzschild, Sussmann97}. In
mathematics, manifold geodesics appear in many branches of higher
mathematics including differential equations, differential geometry, Lie
theory, calculus of variations, algebraic geometry, and topology.

One of the most significant problems on any manifold geodesic is how to
compute its length.  Exact computation of manifold metrics is considered a
fundamental problem in mathematics and physics, dating back for four
centuries: entire fields of mathematics, including the celebrated calculus
of variations, have arisen to tackle this~\cite{Courant53}. Historically,
mathematicians placed strong emphasis on exact computation as opposed to
constant factor approximations~\cite{Courant53}. An algorithmic problem on manifold
geodesics, with modern origins, is to $(1+\eps)$ approximate these metrics
efficiently on a computer.  The core difficulty in the first problem is
that geodesics are the minimum cost path out of an uncountable number of
paths that can travel 'anywhere' on the manifold structure.  This makes
exactly computing these metrics challenging, even in the case of the
nearest neighbor metric for just four fixed points in two dimensions (the
authors are unaware of any easy method for this simplified task).
Calculus of variations can show that the optimal nearest neighbor path is
piecewise hyperbolic, but this is generally insufficient to exactly compute
the nearest neighbor metric---there are point sets where there are
many differentiable, piecewise hyperbolic paths between two data points with
different costs.

In this paper, we solve both problems: we exactly compute the Nearest
Neighbor metric in all cases, and we $(1+\eps)$ approximate it quickly.
Our approach is based on a novel embedding of the data into high dimensions where the geodesics are straight lines.
Then we use a Lipschitz extension theorem to relate the lengths of the shortest paths in the original space and the embedding.
We combine these tools to prove that the nearest neighbor metric is exactly equal to a shortest path distance on a geometric graph, the so-called edge-squared metric, in all cases.
This allows us to compute the nearest-neighbor metric exactly for any given point set in polynomial time, and it is the only known (non-trivial) density-based distance that can be computed by a discrete algorithm.

  \begin{definition}
For $x\in \R^d$, let $\|x\|$ denote the Euclidean norm.
  For a set of points $P\subset \R^d$:
  the \textbf{edge-squared metric} for $a,b\in P$ is
  \[
    \dist_2(a,b) = \inf_{(p_0,\ldots, p_k)}\sum_{i=1}^k \|p_i - p_{i-1}\|^2,
  \]
  where the infimum is over sequences of points $p_0,\ldots, p_k\in P$ with $p_0 = a$ and $p_k = b$.
  \end{definition}

\begin{theorem}\label{thm:NN} The nearest neighbor metric and edge squared
metric are equivalent for any set $P$ in arbitrary dimension that is the
finite collection of compact path-connected sets.
\end{theorem}

This in particular covers the case of $n$ points in $n-1$ dimension. The exact equality is realized when the nearest neighbor path is piecewise
linear, traveling straight from data point to data point.
The edge squared metric has been previously studied by multiple researchers
in machine learning and power-efficient wireless networks, but previously
has only been linked to the nearest neighbor metric by a fairly weak
3-approximation~\cite{cohen15approximating}.
There are several reasons why it is surprising that these metrics are equal:

\begin{enumerate}

\item The optimal nearest neighbor path for two points not in the dataset
is generally composed of hyperbolic arcs.
This holds true even when the dataset is a single point, and was established by~\cite{cohen15approximating} using tools in Riemannian surfaces and the complex plane.
Meanwhile, our Theorem implies an optimal nearest neighbor path for two
data points (in a dataset of any size) is piecewise linear!

\item There are simple and natural variants of the nearest neighbor metric, for which no analog of Theorem~\ref{thm:NN} is known nor suspected.
For example, if one considers powers (other than one) of the distance function as a cost, a corresponding graph-based metric is known to exist only for sets of size at most two.

\item For just three points in a right triangle configuration, there exist an uncountable suite of optimal-cost paths between the two endpoints of the hypotenuse.
Each path in this uncountable suite is piecewise hyperbolic, but, surprisingly, they all have the exact same cost as the edge-squared distance.
Thus, there shortest paths may not even be unique.

 \item The finite union of compact path-connected geometric bodies in arbitrary
 dimension can have extremely complicated geometry, and the Voronoi diagram
on which the nearest neighbor metric depends is poorly understood for even
three of these bodies in two dimension.  There is no other restriction on
the compact geometric objects, and they need not be convex or even simply
connected, see figure~\ref{fig:example1}.
\begin{figure}[htbp]
  \centering
    \includegraphics[width=0.3\textwidth]{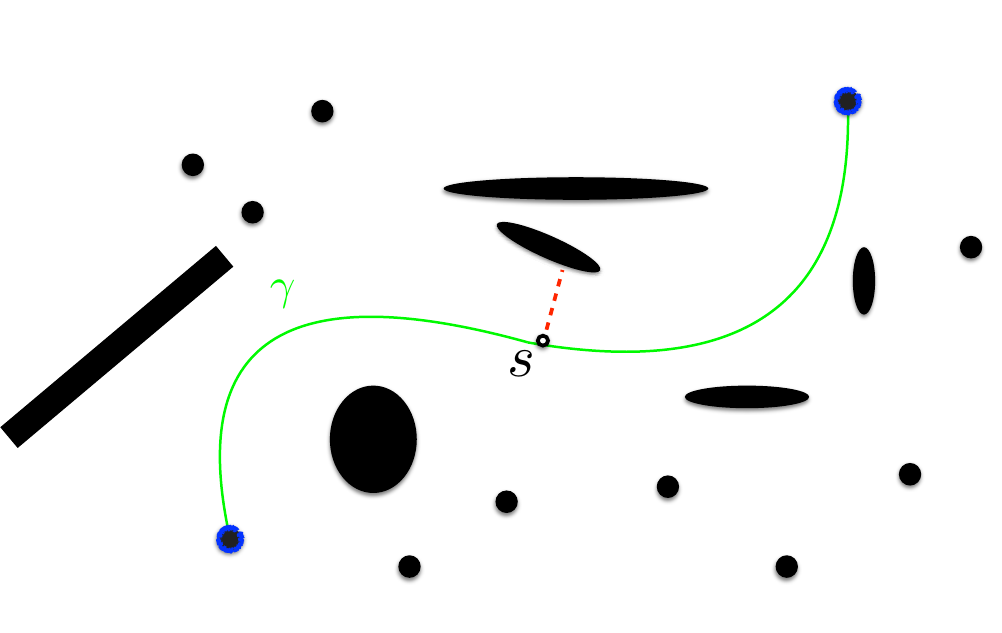}
    \caption{In this figure we have a collection of compact bodies in black.
      The length or cost of the green curve between the two blue points
      is the integral along the curve scaled by the distance to the nearest body.
    A curve may traverse a body at  no cost. Theorem~\ref{thm:NN}
establishes that the shortest path curve between two points goes straight
from compact body to compact body.}
  \label{fig:example1}
\end{figure}
\end{enumerate}

We can now tackle a second problem of interest for manifold geodesics,
which is efficiently $(1+\eps)$ approximating them. In this paper, we show
that the nearest neighbor metric admits $(1+\epsilon)$ spanners computable
in nearly-linear time, with linear size, for any point set in constant
dimension. Remarkably, these spanners are significantly sparser and faster
to compute than the theoretically optimal Euclidean spanners with the same
approximation constant, and nearly match the sparsity of the best known
Euclidean Steiner spanners. Moreover, if the point set comes from a
well-behaved probability distribution in constant dimension (a foundational
assumption in machine learning~\cite{hwang2016}), we show that the nearest neighbor
metric has perfect $1$-spanners of nearly linear size. The latter result is
impossible for many non-density sensitive metrics, such as the Euclidean
metric. Both results rely on Theorem~\ref{thm:NN}, and significantly
improve the nearest neighbor spanners of Cohen et al in~\cite{cohen15approximating}.

Theorem~\ref{thm:NN} and our spanner theorems solve two core problems of
interest for the nearest neighbor metric: exactly computing it for any
dimension, and approximating it quickly for both general point sets and
point sets arising from a well-behaved probability distribution in constant
dimension. This is the first work we know of that computes a manifold
metric exactly without calculus of variations, and we hope that our tools
can be useful for other metric computations and approximations.



  \subsection{Contributions and Past Work}
Our primary contribution is Theorem~\ref{thm:NN}, which lets us exactly
compute the nearest neighbor metric. This significantly strengthens a core
result of Cohen et al~\cite{cohen15approximating}. This theorem should be considered
quite surprising: it equates the nearest neighbor metric with the
edge-squared metric, even when the point set is a collections of compact,
path-connected objects in arbitrarily large dimension. There are no
restrictions on the convexity or simple-connectedness of such objects, so
in general the Voronoi diagram of these objects (on which the nearest
neighbor metric critically depends) can be extremely complicated.

 Besides for exactly computing the nearest neighbor metric, we present the
following theorems on approximate computation:


\begin{theorem} \label{thm:general-spanner}
  For any set of points in $\mathbb{R}^d$ for constant $d$, there exists a $(1+\eps)$
  spanner of the nearest neighbor metric
  with size $O\left(n\eps^{-d/2} \right)$ computable in time
  $O\left(n \log n + n\eps^{-d/2}\log{\frac{1}{\eps}}\right)$. The
  $\log{\frac{1}{\eps}}$ term goes away given access to an algorithm
computing floor function
in $O(1)$ time.
\end{theorem}

\begin{theorem} \label{thm:distribution-spanner}
Suppose points $P$ in Euclidean space are drawn i.i.d from a Lipschitz probability density bounded
above and below by a constant, with support on a
smooth, connected, compact manifold with intrinsic dimension $d$ with boundary of bounded curvature. Then w.h.p. the $k$-NN graph of
  $P$ for $k = O(2^d \ln n)$ and edges weighted with Euclidean
  distance squared, is a $1$-spanner of the nearest neighbor
  metric on $P$.
\end{theorem}

These theorems rely on Theorem~\ref{thm:NN} and considerably strengthen the
spanner results on the nearest neighbor metric
from~\cite{cohen15approximating}. They critically rely on
Theorem~\ref{thm:NN}, which show it suffices to compute spanners of the
edge-squared metric.
Previously, sparse spanners of the edge-squared metric were shown to exist in two
dimensions via Yao graphs and Gabriel graphs~\cite{LiWan2001}, but these
did not generalize well to constant dimension: Yao
graphs are not very efficient to compute, and Gabriel graphs can have
quadratically many edges even in $3$ dimensions~\cite{chazelle94selecting}.
The spanners we produce are sparser than the
theoretical optimal for Euclidean spanners~\cite{Le19}.

Theorem~\ref{thm:distribution-spanner} proves that a $1$-spanner of
the nearest neighbor metric can be found assuming points are samples from a
probability density, by using a $k$-$NN$ graph for
appropriate $k$. Our result is tight when $d$ is constant. This
is not possible for Euclidean distance, as a $1$-spanner is almost
surely the complete graph. Although the restrictions on the probability
density may seem limiting,
they are in fact quite flexible and standard in
machine learning theory and practice~\cite{hwang2016, alamgir12shortest}. For example, although they do not cover the case
of a Gaussian (unbounded support), they do cover the case of a Gaussian
where the very thin tail is cut off, and this recovers most of the relevant
data in a Gaussian distribution. Past work on similar results
include~\cite{Balister05, Gonzales2003}.

Theorem~\ref{thm:NN} will additionally allow us
to compute the persistent homology of $\dist_N$, a task useful for
topological data analysis~\cite{edelsbrunner02topological}.
We also show how the nearest neighbor metric generalizes Euclidean
distance and maximum-edge Euclidean MST distance ~\cite{LiWan2001}

The core mathematical contribution of our work is the
statement and proof of Theorem~\ref{thm:NN}.  The techniques to prove
our other results are simpler and mostly leverage Theorem~\ref{thm:NN} and
past work. We have included them nonetheless to provide a more complete
picture of the nearest neighbor metric, and to provide possible directions for future
work.



\subsection{Definitions and Preliminaries} 
\label{sec:definitions}
In this section, we establish additional definitions for our paper. These
are mostly of interest for our spanner and persistent homology results, and are not strictly
necessary for Theorem~\ref{thm:NN}.

\vspace{3 mm}

\noindent \textbf{Spanners:} For real value $t \geq 1$, a $t$-spanner of
a weighted graph $G$ is a subgraph $S$ such that $d_G(x,y) \leq d_S(x,y)
\leq t\cdot d_G(x,y)$ where $d_G$ and $d_S$ represent the shortest path
distance functions between vertex pairs in $G$ and $S$. Spanners
of Euclidean distances, and general graph distances, have been
studied extensively, and their importance as a data structure is
well established.
~\cite{Chew1986, Vaidya1991, Callahan1993,HarPeled13}.

\vspace{3 mm}
\noindent \textbf{$k$-nearest neighbor graphs:} The $k$-nearest neighbor graph
($k$-NN graph) for a set of objects $V$ is a graph with vertex set $V$
and an edge from $v\in V$ to its $k$ most similar objects in $V$, under
a given distance measure. In this paper, the underlying distance
measure is Euclidean, and the edge weights are Euclidean distance
squared.
$k$-NN
graph constructions are a key data structure in machine
learning~\cite{Dong11, Chen11}, clustering~\cite{vL09}, and manifold learning~\cite{tenenbaum00global}.

\vspace{3 mm}
\noindent \textbf{Gabriel Graphs:} The Gabriel graph is a graph where
two vertices $p$ and $q$ are joined by an edge if and only if the disk
with diameter $pq$ has no other points of $S$ in the interior. The
Gabriel graph is a subgraph of the Delaunay
triangulation~\cite{SridharMaster}, and a
$1$-spanner of the edge-squared metric~\cite{SridharMaster}. Gabriel
graphs will be used in the proof of
Theorem~\ref{thm:distribution-spanner}.

\vspace{3 mm}
\noindent \textbf{Persistent Homology:}
  Persistent homology is a popular tool in computational geometry and topology to ascribe quantitative topological invariants to spaces that are stable with respect to perturbation of the input.
  In particular, it's possible to compare the so-called persistence diagram of a function defined on a sample to that of the complete space~\cite{chazal08towards}.
  These two aspects of persistence theory---the intrinsic nature of topological invariants and the ability to rigorously compare the discrete and the continuous---are both also present in our theory of nearest neighbor metrics.
  Indeed, our primary motivation for studying these metrics was to use them as inputs to persistence computations for problems such as persistence-based clustering~\cite{chazal13persistence} or metric graph reconstruction~\cite{aanjaneya12metric}.


  \section{Outline}
%
Section~\ref{sec:NN} contains the proof of Theorem~\ref{thm:NN},
equating the edge-squared metric and nearest neighbor metric in
all cases. It should be noted that our proof is robust enough to handle not
just finite point sets, but also countably infinite collections of disjoint
path-connected, compact sets. Remarkably, there is no restriction on the convexity or
simply-connectedness of these sets.

As an example of using the nearest neighbor metric to compute intrinsic structure, Section~\ref{sec:persistence} shows how Theorem~\ref{thm:NN} allows us to compute the persistent homology of the nearest neighbor metric.

Section~\ref{sec:edge-power} introduces the $p$-power metrics. We show
that Euclidean spanners and Euclidean MSTs are special cases of
$p$-power spanners. We show how
clustering algorithms including $k$-means, level-set methods,
and single linkage clustering, are special cases of
clustering with $p$-power metrics. $p$-power metrics are identical to the
Neighbor metric when $p=2$. This is further detailed in
Appendix~\ref{ap:clustering-link}.

Section~\ref{sec:general-spanner} outlines a proof of
Theorem~\ref{thm:general-spanner}, and compares our spanner to new lower
bounds on the sparsity of $(1+\eps)$-spanners of the Euclidean metric.  We
outline a proof of Theorem~\ref{thm:distribution-spanner} in
Section~\ref{sec:distribution-spanner} and discuss its implications.

Conclusions and open questions are in
Section~\ref{sec:conclusions}. Full proofs for
Theorems~\ref{thm:distribution-spanner},~\ref{thm:general-spanner}
are contained in the Appendix.

  \section{Exactly Computing the nearest neighbor metric}
\label{sec:NN}
In this section, we prove Theorem~\ref{thm:NN} on finite point sets, and
explain in Section~\ref{sec:bodies} that our proof strategy applies to
finite collections of path-connected compact bodies.

  First, lets observe what happens when $P$ has only two points $a$ and
$b$, $\dist_2(a,b) = \dist_N(a,b)$.  This reduces to a high school calculus
exercise as the minimum path $\gamma$ will be a straight line between the
points and the nearest neighbor metric is
\begin{align*}
  \dist_N(a,b) &= 4\int_0^1 \distto_P(\gamma(t))\|\gamma'(t)\|dt \\
  & = 8\int_0^{\frac{1}{2}} t \|a - b\|^2 dt = \|a - b\|^2 = \dist_2(a,b).
\end{align*}
  Now it is easy to observe that the
  nearest neighbor metric is never greater than the
  edge-squared distance, as proven in the following lemma.

  \begin{lemma}\label{lem:dist_N_le_dist}
    For all $s,p\in P$, we have $\dist_N(s,p)\le \dist_2(s,p)$.
  \end{lemma}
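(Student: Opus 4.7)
The plan is to take a shortest edge-squared path witnessing $\dist_2(s,p)$ and turn it into an admissible piecewise-$C_1$ path for the infimum defining $\dist_N(s,p)$, then estimate its length segment by segment using the calculation already noted in the excerpt for the two-point case.

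More concretely, let $p_0 = s, p_1, \ldots, p_k = p$ be a sequence in $P$ achieving $\dist_2(s,p) = \sum_{i=1}^{k} \|p_i - p_{i-1}\|^2$. Define $\gamma \colon [0,1] \to \R^d$ to be the piecewise-linear path that visits $p_0,\ldots, p_k$ in order, reparameterized so that the $i$-th segment occupies a subinterval $[t_{i-1}, t_i]$; by the parameterization-invariance noted right after the definition of $\len$, I am free to reparameterize each segment independently by $u \in [0,1]$ with $\gamma(u) = (1-u)p_{i-1} + u p_i$ and $\|\gamma'(u)\| = \|p_i - p_{i-1}\|$.

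For a point $\gamma(u)$ on the $i$-th segment, both $p_{i-1}$ and $p_i$ lie in $P$, so
\[
\distto_P(\gamma(u)) \le \min\bigl(\|\gamma(u) - p_{i-1}\|, \|\gamma(u)- p_i\|\bigr) = \min(u,1-u)\,\|p_i - p_{i-1}\|.
\]
Hence the contribution of the $i$-th segment to $\len(\gamma)$ is at most
\[
\|p_i - p_{i-1}\|^2 \int_0^1 \min(u,1-u)\, du = \tfrac{1}{4}\,\|p_i-p_{i-1}\|^2.
\]
Summing over $i$ and multiplying by $4$ gives $\dist_N(s,p) \le 4\,\len(\gamma) \le \sum_{i=1}^k \|p_i-p_{i-1}\|^2 = \dist_2(s,p)$.

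There is no real obstacle here; the only thing to be careful about is that the $\gamma$ produced is piecewise-$C_1$ (which it is, being piecewise-linear with finitely many breakpoints) and that one is allowed to reparameterize segment by segment, which is exactly the invariance stated just below the definition of $\len$. This is essentially the same high-school calculation that establishes $\dist_N(a,b) = \dist_2(a,b)$ for two-point sets, applied additively along each edge of the optimal edge-squared walk.
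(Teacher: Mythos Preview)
Your proof is correct and is essentially identical to the paper's own argument: the paper also takes an optimal edge-squared sequence $q_0,\ldots,q_k$, concatenates the straight segments $\psi_i(t) = t q_i + (1-t)q_{i-1}$, invokes the two-point computation to get $\len(\psi_i) \le \|q_i - q_{i-1}\|^2/4$, and sums. Your version is slightly more explicit about the bound $\distto_P(\gamma(u)) \le \min(u,1-u)\,\|p_i - p_{i-1}\|$, but the content is the same.
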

  \begin{proof}
    Fix any points $s,p\in P$.
    Let $q_0,\ldots, q_k \in P$ be such that $q_0 = s$, $q_k = p$ and
    \[
      \dist_2(s,p) = \sum_{i=1}^k \|q_i - q_{i-1}\|^2.
    \]
    Let $\psi_i(t) = tq_i + (1-t)q_{i-1}$ be the straight line segment from $q_{i-1}$ to $q_i$.
    Observe that $\len(\psi_i) = \|q_i - q_{i-1}\|^2 / 4$, by the same argument as in the two point case.
    Then, let $\psi$ be the concatenation of the $\psi_i$ and it follows that
    \[
      \dist_2(s,p) = 4 \len(\psi) \ge 4 \inf_{\gamma\in \ourpath(s,p)} \len(\gamma) = \dist_N(s,p).
    \]
  \end{proof}

  By Lemma~\ref{lem:dist_N_le_dist}, it suffices to show that $\dist_N(a, b) \geq \dist_2(a,b)$ for all $a, b \in P$.
\label{sec:the_proof}

  Let $P\subset \R^d$ be a set of $n$ points.
  Pick any \emph{source} point $s\in P$.
  Order the points of $P$ as $p_1,\ldots ,p_n$ so that
  \[
    \dist_2(s,p_1) \le \cdots \le \dist_2(s, p_n).
  \]
  This will imply that $p_1 = s$.
  It will suffice to show that for all $p_i\in P$, we have $\dist_2(s,p_i) = \dist_N(s,p_i)$.
  There are three main steps:
  \begin{enumerate}
    \item We first show that when $P$ is a subset of the vertices of an axis-aligned box, $\dist = \dist_N$.  In this case, shortest paths for $\dist$ are single edges and shortest paths for $\dist_N$ are straight lines.
    \item We then show how to lift the points from $\R^d$ to $\R^n$ by a Lipschitz map $m$ that places all the points on the vertices of a box and preserves $\dist_2(s,p)$ for all $p\in P$.
    \item Finally, we show how the Lipschitz extension of $m$ is also Lipschitz as a function between nearest neighbor metrics.  We combine these pieces to show that $\dist \le \dist_N$.  As $\dist \ge \dist_N$ (Lemma~\ref{lem:dist_N_le_dist}), this will conclude the proof that $\dist = \dist_N$.
  \end{enumerate}
The key to the second step, to be elaborated in Section~\ref{sec:lifting},
is that if you take points on a line and raise the pairwise distances to
the $1/2$ power, you get points on a box. This is a special case of the
general theory on screw functions developed by Von Neumann and Schoenberg,
which asserts a far more general criterion on when functions applied to
pairwise distances between points on a line can be embedded into Euclidean
space~\cite{VonNeumann41}.

\subsubsection{Boxes} 
\label{sec:boxes}

  Let $Q$ be the vertices of a box in $\R^n$.
  That is, there exist some positive real numbers $\alpha_1,\ldots , \alpha_n$ such that each $q\in Q$ can be written as $q = \sum_{i\in I} \alpha_i e_i$, for some $I\subseteq [n]$.

  Let the source $s$ be the origin.
  Let $\distto_Q:\R^n\to \R$ be the distance function to the set $Q$.
  Setting $r_i(x) := \min\{x_i, \alpha_i - x_i\}$ (a lower bound on the difference in the $i$th coordinate to a vertex of the box), it follows that
  \begin{equation}
    \label{eq:distto_bded_by_ris}
    \distto_Q(x) \ge \sqrt{\sum_{i= 1}^n r_i(x)^2}.
  \end{equation}

  Let $\gamma:[0,1]\to \R^n$ be a curve in $\R^n$.
  Define $\gamma_i(t)$ to be the projection of $\gamma$ onto its $i$th coordinate.
  Thus,
  \begin{equation}\label{eq:ri_as_min}
    r_i(\gamma(t)) = \min\{\gamma_i(t), \alpha_i - \gamma_i(t)\}
  \end{equation}
  and
  \begin{equation}\label{eq:gamma_decomposed}
    \|\gamma'(t)\| = \sqrt{\sum_{i = 1}^n \gamma_i'(t)^2}.
  \end{equation}
  We can bound the length of $\gamma$ as follows. For simplicity of exposition we only present the case
  of a path from the origin to the far corner, $p = \sum_{i=1}^n \alpha_i e_i$. 
  \begin{align*}
    \len(\gamma)
      &= \int_0^1 \distto_Q(\gamma(t))\|\gamma'(t)\|dt \\
      & \text{[by definition]}\\
      &\ge \int_0^1 \left(\sqrt{\sum_{i= 1}^n r_i(\gamma(t))^2} \sqrt{\sum_{i = 1}^n \gamma_i'(t)^2}\right) dt \\
      & \text{[by \eqref{eq:distto_bded_by_ris} and \eqref{eq:gamma_decomposed}]}\\
      &\ge \sum_{i=1}^n \int_0^1 r_i(\gamma(t)) \gamma_i'(t) dt \\
      & \text{[by Cauchy-Schwarz]}\\
      &\ge \sum_{i=1}^n \left(\int_0^{\ell_i} \gamma_i(t) \gamma_i'(t)dt + \int_{\ell_i'}^1 (\alpha_i - \gamma_i(t)) \gamma_i'(t) dt\right) \\
      & \text{[by \eqref{eq:ri_as_min} where $\gamma_i(\ell_i) = \alpha_i/2$ for the first time} \\
      & \text{and $\gamma_i(\ell_i') = \alpha_i/2$ for the last time.]}\\
      &= \sum_{i=1}^n 2\int_0^{\ell_i} \gamma_i(t) \gamma_i'(t) dt \\
      & \text{[by symmetry]}\\
      &\ge \sum_{i=1}^n \frac{\alpha_i^2}{4} \\
      & \text{[by basic calculus]}
  \end{align*}

  It follows that if $\gamma$ is any curve that starts at $s$ and ends at $p = \sum_{i=1}^n \alpha_i e_i$, then $\dist_N(s,p) = \dist_2(s,p)$.


  \subsubsection{Lifting the points to $\R^n$} 
\label{sec:lifting}

  Define a mapping $m: P \to \R^n$.  We do this by adding the points $p_1, \ldots, p_n$, as defined above, one point at a time.
  For each new point we will introduce a new dimension. We start by setting $m(p_1) = 0$ and by induction:
  \begin{equation}\label{eq:defn_of_m}
    m(p_i) = m(p_{i-1}) + \sqrt{\dist_2(s, p_i) - \dist_2(s, p_{i-1})} e_i,
  \end{equation}
  where the vectors $e_i$ are the standard basis vectors in $\R^n$.
  A similar embedding works for some other functions and was extensively studied by Schoenberg and Von Neumann in the theory of screw functions.

  \begin{lemma}\label{lem:m_and_dist}
    For all $p_i, p_j\in P$, we have
    \begin{enumerate}
      \item[(i)] $\|m(p_j) - m(p_i)\| = \sqrt{|\dist_2(s,p_j) - \dist_2(s,p_i)|}$, and
      \item[(ii)]$\|m(s) - m(p_j)\|^2 \le \|m(p_i)\|^2 + \|m(p_i) - m(p_j)\|^2$.
    \end{enumerate}
  \end{lemma}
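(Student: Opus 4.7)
The plan is to exploit the axis-aligned nature of the construction: each new point $m(p_i)$ extends $m(p_{i-1})$ only along a fresh coordinate direction $e_i$, so the vectors $m(p_1),\ldots,m(p_n)$ sit on a box whose edge lengths telescope nicely. I will first establish part (i) by a direct Pythagorean computation, then deduce (ii) from (i) via the triangle inequality applied to the real numbers $\dist_2(s,p_i)$ and $\dist_2(s,p_j)$.

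For part (i), I would assume without loss of generality that $i \le j$ (by symmetry of both sides). Unfolding the recursion~\eqref{eq:defn_of_m} gives
\[
m(p_j) - m(p_i) \;=\; \sum_{k=i+1}^{j} \sqrt{\dist_2(s,p_k) - \dist_2(s,p_{k-1})}\, e_k.
\]
The monotone ordering of the $\dist_2(s,p_k)$ ensures each radicand is nonnegative, so the sum is well defined. Because the $e_k$ are orthonormal standard basis vectors indexed by \emph{distinct} coordinates, taking squared norms turns the sum into a telescoping sum:
\[
\|m(p_j) - m(p_i)\|^2 \;=\; \sum_{k=i+1}^{j} \bigl(\dist_2(s,p_k) - \dist_2(s,p_{k-1})\bigr) \;=\; \dist_2(s,p_j) - \dist_2(s,p_i).
\]
Taking square roots (and absorbing the $|\cdot|$ by symmetry in $i,j$) yields the claim.

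For part (ii), I would specialize (i) in two ways. Since $p_1 = s$ and $m(s)=m(p_1)=0$, applying (i) with $p_i$ replaced by $s=p_1$ gives $\|m(s)-m(p_j)\|^2 = \dist_2(s,p_j)$ and $\|m(p_i)\|^2 = \|m(p_i)-m(s)\|^2 = \dist_2(s,p_i)$. Applying (i) directly gives $\|m(p_i)-m(p_j)\|^2 = |\dist_2(s,p_j)-\dist_2(s,p_i)|$. The desired inequality then becomes the purely scalar statement
\[
\dist_2(s,p_j) \;\le\; \dist_2(s,p_i) \;+\; \bigl|\dist_2(s,p_j) - \dist_2(s,p_i)\bigr|,
\]
which is immediate by splitting into the two sign cases of the absolute value.

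There is no real obstacle here; the only thing to be careful about is that the construction uses a different coordinate $e_i$ for each new point, which is exactly what makes the squared distances telescope. Once (i) is in hand via this telescoping, (ii) is just a one-line restatement of the triangle inequality on the line.
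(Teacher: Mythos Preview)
Your proof is correct and essentially identical to the paper's own argument: both unfold the recursion, use orthogonality of the $e_k$ to telescope the squared norm for (i), and then reduce (ii) via (i) to the elementary scalar inequality $\dist_2(s,p_j) \le \dist_2(s,p_i) + |\dist_2(s,p_j)-\dist_2(s,p_i)|$.
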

  \begin{proof}
    \emph{Proof of (i).}
    Without loss of generality, let $i \le j$.
    Then, by the definition of $m$, expanding the norm, and telescoping the sum, we get the following.
    \begin{align*}
      & \|m(p_j) - m(p_i)\| \\
      &= \left\|\sum_{k=i+1}^j \sqrt{\dist_2(s, p_k) - \dist_2(s, p_{k-1})} e_k \right\| \\
      &= \sqrt{\sum_{k=i+1}^j (\dist_2(s, p_k) - \dist_2(s, p_{k-1}))}\\
      &= \sqrt{\dist_2(s, p_j) - \dist_2(s, p_i)}.
    \end{align*}

    \noindent\emph{Proof of (ii).}
    As $m(s) = 0$, it suffice to observe that
    \begin{align*}
      \|m(p_j)\|^2
        &= \dist_2(s, p_j) \because{by \emph{(i)}}\\
        &\le \dist_2(s, p_i) + |\dist_2(s,p_j) - \dist_2(s,p_i)| \\
        &= \|m_(p_i)\|^2 + \|m(p_i) - m(p_j)\|^2  \because{by \emph{(i)}}
    \end{align*}
  \end{proof}

  We can now show that $m$ has all of the desired properties.

  \begin{prop}\label{prop:m_is_good}
    Let $P\subset\R^d$ be a set of $n$ points, let $s\in P$ be a designated source point, and let $m:P\to \R^n$ be the map defined as in \eqref{eq:defn_of_m}.
    Let $\dist'$ denote the edge squared metric for the point set $m(P)$ in $\R^n$.
    Then,
    \begin{enumerate}
      \item[(i)] $m$ is $1$-Lipschitz as a map between Euclidean metrics,
      \item[(ii)] $m$ maps the points of $P$ to the vertices of a box, and
      \item[(iii)] $m$ preserves the edge squared distance to $s$, i.e.\ $\dist'(m(s), m(p)) = \dist_2(s,p)$ for all $p\in P$.
    \end{enumerate}
  \end{prop}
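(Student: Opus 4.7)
The plan is to verify the three claims in turn, using Lemma~\ref{lem:m_and_dist} as the main workhorse. None of the three is particularly deep: once Lemma~\ref{lem:m_and_dist} is in hand, each becomes a one-line inequality plus an easy telescoping.

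For (ii), I would unroll the recursive definition \eqref{eq:defn_of_m} starting from $m(p_1) = 0$. Setting $\alpha_i := \sqrt{\dist_2(s, p_i) - \dist_2(s, p_{i-1})}$ (nonnegative because we ordered the points by $\dist_2$-distance from $s$), induction on $j$ immediately gives $m(p_j) = \sum_{i=2}^{j} \alpha_i e_i$. This places every $m(p_j)$ on a vertex of the axis-aligned box $\prod_{i=2}^{n} [0, \alpha_i]$, after dropping any coordinates with $\alpha_i = 0$ to match the ``all positive sides'' convention of Section~\ref{sec:boxes}.

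For (i), I would combine Lemma~\ref{lem:m_and_dist}(i) with two standard properties of $\dist_2$: the triangle inequality (it is a shortest-path metric), and the single-edge upper bound $\dist_2(p_i, p_j) \le \|p_i - p_j\|^2$. Chaining these yields $\|m(p_j) - m(p_i)\|^2 = |\dist_2(s, p_j) - \dist_2(s, p_i)| \le \dist_2(p_i, p_j) \le \|p_i - p_j\|^2$, which is exactly the $1$-Lipschitz property as a map between Euclidean metrics.

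For (iii), I would prove both inequalities. The upper bound $\dist'(m(s), m(p_j)) \le \dist_2(s, p_j)$ comes for free from the single-edge path whose squared length is $\|m(p_j)\|^2 = \dist_2(s, p_j)$ by Lemma~\ref{lem:m_and_dist}(i). For the lower bound, given any edge-squared path $m(s) = m(q_0), m(q_1), \ldots, m(q_k) = m(p_j)$ in $m(P)$, Lemma~\ref{lem:m_and_dist}(i) rewrites its total cost as $\sum_{i=1}^{k} |\dist_2(s, q_i) - \dist_2(s, q_{i-1})|$, and the ordinary triangle inequality in $\R$ telescopes this to at least $\dist_2(s, q_k) - \dist_2(s, q_0) = \dist_2(s, p_j)$. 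The one mild subtlety I expect to flag is exactly here: the minimization defining $\dist'$ ranges over paths through $m(P)$, not through all of $\R^n$, so one has to note that Lemma~\ref{lem:m_and_dist}(i) applies edge by edge to each $\|m(q_i) - m(q_{i-1})\|^2$. After that observation, the scalar triangle inequality does all the work, and no appeal to the box computation of Section~\ref{sec:boxes} is needed at this stage --- that geometry enters only later, when comparing $\dist'$ with the nearest-neighbor geodesic.
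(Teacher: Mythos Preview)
Your proposal is correct; parts (i) and (ii) match the paper almost verbatim. For (iii) you take a slightly different route: the paper argues via Lemma~\ref{lem:m_and_dist}(ii) that any optimal path with $k>1$ hops can be shortened by deleting $q_1$, forcing the single-edge path to be optimal. Your version instead applies Lemma~\ref{lem:m_and_dist}(i) edge by edge and telescopes $\sum_i |\dist_2(s,q_i)-\dist_2(s,q_{i-1})| \ge \dist_2(s,p_j)$ directly. Both are valid; your argument is marginally cleaner in that it never invokes part~(ii) of Lemma~\ref{lem:m_and_dist} at all, while the paper's version makes the stronger structural point that the one-hop path is literally the minimizer.
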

  \begin{proof}
    \emph{Proof of (i).} To prove the Lipschitz condition, fix any $a,b\in P$ and bound the distance as follows.
    \begin{align*}
      \|m(a) - m(b)\|
        &= \sqrt{|\dist_2(s,a) - \dist_2(s,b)|} \because{Lem.~\ref{lem:m_and_dist}(i)}\\
        &\le \sqrt{\dist_2(a,b)} \because{triangle ineq.}\\
        &\le \|a-b\|. \because{by def. of $\dist_2$}\\
    \end{align*}

    \noindent
    \emph{Proof of (ii).} That $m$ maps $P$ to the vertices of a box is immediate from the definition.
    The box has side lengths $\|m_i - m_{i-1}\|$ for all $i>1$ and $p_i = \sum_{k=1}^i \|m_k - m_{k-1}\| e_k$.

    \noindent
    \emph{Proof of (iii).} We can now show that the edge squared distance to $s$ is preserved.
    Let $q_0,\ldots, q_k$ be the shortest sequence of points of $m(P)$ that realizes the edge-squared distance from $m(s)$ to $m(p)$, i.e., $q_0 = m(s)$, $q_k = m(p)$, and
    \[
      \dist'(m(s), m(p)) = \sum_{i = 1}^k \|m(q_i) - m(q_{i-1})\|^2.
    \]
    If $k> 1$, then Lemma~\ref{lem:m_and_dist}(ii) implies that removing $q_1$ gives a shorter sequence.
    Thus, we may assume $k = 1$ and therefore, by Lemma~\ref{lem:m_and_dist}(i),
    \[
      \dist'(m(s), m(p)) = \|m(s) - m(p)\|^2 = \dist_2(s,p). 
    \]
  \end{proof}


  \subsubsection{The Lipschitz Extension} 
\label{sec:lip_extension}

  Proposition~\ref{prop:m_is_good} and the Kirszbraun theorem on Lipschitz extensions imply that we can extend $m$ to a $1$-Lipschitz function $f: \R^d\to \R^n$ such that $f(p) = m(p)$ for all $p\in P$ \cite{Kirszbraun1934,Valentine1945,brehm1981}.

  \begin{lemma}\label{lem:dist_N_lipschitz}
    The function $f$ is also $1$-Lipschitz as mapping from $\R^d\to \R^n$ with both spaces endowed with the nearest neighbor metric.
  \end{lemma}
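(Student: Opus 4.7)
The plan is to prove this by pulling back: given any piecewise-$C_1$ path $\gamma$ from $a$ to $b$ in $\R^d$, the composition $\tilde\gamma := f\circ\gamma$ is a path from $f(a)$ to $f(b)$ in $\R^n$, and I aim to show $\len(\tilde\gamma)\le\len(\gamma)$, after which taking the infimum over $\gamma$ yields $\dist_N(f(a),f(b))\le \dist_N(a,b)$ (the normalizing factor of $4$ appears on both sides).

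The proof reduces to two pointwise inequalities that together dominate the integrand. First, for the distance-function factor: for any $x\in\R^d$ and any $p\in P$, the Kirszbraun extension gives $\|f(x)-f(p)\|\le\|x-p\|$, and since $f(p)=m(p)\in m(P)$, taking the infimum over $p\in P$ yields
\[
    \distto_{m(P)}(f(x)) \;\le\; \inf_{p\in P}\|f(x)-f(p)\| \;\le\; \inf_{p\in P}\|x-p\| \;=\; \distto_P(x).
\]
Second, for the speed factor: since $f$ is $1$-Lipschitz and $\gamma$ is piecewise $C_1$, the composition $\tilde\gamma$ is absolutely continuous, hence differentiable almost everywhere by Rademacher's theorem, and at any point of simultaneous differentiability, $\|\tilde\gamma'(t)\|\le\|\gamma'(t)\|$ by the Lipschitz property applied to difference quotients. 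Multiplying these two inequalities and integrating gives
\[
    \len(\tilde\gamma) \;=\; \int_0^1 \distto_{m(P)}(\tilde\gamma(t))\|\tilde\gamma'(t)\|\,dt \;\le\; \int_0^1 \distto_P(\gamma(t))\|\gamma'(t)\|\,dt \;=\; \len(\gamma).
\]

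The main technical obstacle I foresee is that the definition of $\len$ in the paper restricts to piecewise-$C_1$ paths, but $f\circ\gamma$ need only be Lipschitz (equivalently, absolutely continuous). I would handle this either by slightly broadening the class of admissible paths to absolutely continuous ones with essentially bounded derivative, noting that this does not change $\inf_\gamma\len(\gamma)$ by a standard mollification/approximation argument, or by approximating $f$ by smooth $1$-Lipschitz functions (convolution with a mollifier preserves the Lipschitz constant) so that $f\circ\gamma$ becomes piecewise $C_1$, applying the above bound to each approximant, and passing to the limit using dominated convergence. Either way, the inequality $\len(f\circ\gamma)\le\len(\gamma)$ transfers to $\dist_N(f(a),f(b))\le\dist_N(a,b)$, which is the desired $1$-Lipschitz property of $f$ between the nearest-neighbor geodesic distances.
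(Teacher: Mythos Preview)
Your proposal is correct and follows essentially the same approach as the paper: bound $\distto_{m(P)}(f(x))\le\distto_P(x)$ via the Lipschitz property of $f$, bound $\|(f\circ\gamma)'(t)\|\le\|\gamma'(t)\|$, multiply, integrate, and take infima. In fact you are more careful than the paper, which simply asserts the derivative bound for all $t$ without addressing the regularity issue you flag; your remarks on absolute continuity and mollification fill a gap the paper leaves implicit.
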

  \begin{proof}
    We are interested in two distance functions $\distto_P:\R^d \to \R$ and $\distto_{f(P)}: \R^n\to \R$.
    Recall that each is the distance to the nearest point in $P$ or $f(P)$ respectively.
    \begin{align*}
      \distto_{f(P)}(f(x))
        &= \min_{q\in f(P)} \|q - f(x)\| \because{by definition}\\
        &= \min_{p\in P} \|f(p) - f(x)\| \because{$q\in f(P)$}\\
        &\le \min_{p\in P} \|p - x\|\because{$f$ is $1$-Lipschitz}\\
        &= \distto_P(x). \because{by definition}\\
    \end{align*}
    For any curve $\gamma:[0,1]\to \R^d$ and for all $t\in [0,1]$, we have $\|(f\circ \gamma)'(t)\| \le \|\gamma'(t)\|$.
    It then follows that

    \begin{align}
      \len'(f\circ \gamma) &= \int_0^1 \distto_{f(P)}(f(\gamma(t)))\|(f\circ\gamma)'(t)\|dt \nonumber\\
              & \le \int_0^1 \distto_{P}(\gamma(t))\|\gamma'(t)\|dt = \len(\gamma),\label{eq:curves_shorten}
      \end{align}
    where $\len'$ denotes the length with respect to $\distto_{f(P)}$.
    Thus, for all $a,b\in P$,
    \begin{align*}
      \dist_N(a,b)
        &= 4 \inf_{\gamma\in \ourpath(a,b)} \len(\gamma) \because{by definition}\\
        &\ge 4 \inf_{\gamma\in \ourpath(a,b)} \len'(f\circ\gamma) \because{by \eqref{eq:curves_shorten}}\\
        &\ge 4 \inf_{\gamma'\in \ourpath(f(a),f(b))} \len'(\gamma') \because{$f\circ\gamma$ is a path}\\
        &= \dist_N(f(a), f(b)). \because{by definition}\\
    \end{align*}
  \end{proof}

  We now restate Theorem~\ref{thm:NN} for convenience, and prove
  it.
  \begin{theorem}\label{thm:equality}
    For any point set $P\subset\R^d$, the edge squared metric $\dist$ and the nearest neighbor metric $\dist_N$ are identical.
  \end{theorem}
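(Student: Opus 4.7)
The plan is to establish the reverse inequality $\dist_2(a,b) \le \dist_N(a,b)$ for every pair $a,b\in P$; combined with Lemma~\ref{lem:dist_N_le_dist} this yields the claimed equality. The three ingredients developed in Sections~\ref{sec:boxes} through~\ref{sec:lip_extension} essentially say this already, so the final argument should be a short assembly rather than a new idea.

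Fix a pair $a,b \in P$. Take $s := a$ as the source point used to order $P$, let $m: P \to \R^n$ be the lifting defined in~\eqref{eq:defn_of_m} for this ordering, and let $f: \R^d \to \R^n$ be a $1$-Lipschitz Kirszbraun extension of $m$, so that $f|_P = m$. Writing $\dist_2'$ and $\dist_N'$ for the edge-squared and nearest-neighbor geodesic distances on the lifted point set $m(P) \subset \R^n$, the chain I intend to build is
\[
  \dist_2(a,b) \;=\; \dist_2'(m(a), m(b)) \;\le\; \dist_N'(m(a), m(b)) \;\le\; \dist_N(a,b).
\]
The first equality is Proposition~\ref{prop:m_is_good}(iii) applied with target $b$. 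The middle inequality is the box estimate from Section~\ref{sec:boxes}: the image $m(P)$ sits inside the vertex set $Q$ of an axis-aligned box in $\R^n$, so $\distto_{m(P)} \ge \distto_Q$ pointwise, and the path-length lower bound derived there yields $\len(\gamma) \ge \tfrac{1}{4}\|m(a)-m(b)\|^2$ for every $\gamma$ from $m(a)$ to $m(b)$; combined with Lemma~\ref{lem:m_and_dist}(ii) (which guarantees that the single edge realizes $\dist_2'(m(a),m(b)) = \|m(a)-m(b)\|^2$) this gives $\dist_N'(m(a),m(b)) \ge \dist_2'(m(a),m(b))$. The final inequality is exactly Lemma~\ref{lem:dist_N_lipschitz}, applied using $f(a) = m(a)$ and $f(b) = m(b)$.

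The genuine work lives in the three preceding subsections; the theorem then falls out by composition, and the only question is whether the assembly is clean. The one place I would double-check is that the box estimate of Section~\ref{sec:boxes}, written there for a path whose endpoint is the far corner $\sum_i \alpha_i e_i$, still applies when the endpoint is $m(b) = m(p_i)$, which occupies only a subset of the coordinate axes. This extension is routine: velocity contributions from the unused coordinates only enlarge $\|\gamma'(t)\|$, the corresponding $r_k$ terms are nonnegative, and the same Cauchy--Schwarz-plus-basic-calculus argument produces the per-axis bound $\alpha_k^2/4$ on each active coordinate, which sum to exactly $\|m(a)-m(b)\|^2/4$ as needed.
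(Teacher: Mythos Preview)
Your proposal is correct and is essentially the paper's own proof: fix one endpoint as the source, lift via $m$, invoke Proposition~\ref{prop:m_is_good}(iii), the box estimate of Section~\ref{sec:boxes}, and Lemma~\ref{lem:dist_N_lipschitz} in that order, then combine with Lemma~\ref{lem:dist_N_le_dist}. Your observation that the box calculation was only written for the far corner, and your fix (restrict the Cauchy--Schwarz step to the active coordinates, since dropping the inactive $r_k^2$ and $\gamma_k'^2$ terms from the two square roots can only decrease the integrand), is a correct and slightly more careful treatment of a point the paper glosses over.
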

  \begin{proof}
    Fix any pair of points $s$ and $p$ in $P$.
    Define the Lipschitz mapping $m$ and its extension $f$ as in \eqref{eq:defn_of_m}.
    Let $\dist'$ and $\dist_N'$ denote the edge-squared and nearest neighbor metrics on $f(P)$ in $\R^n$.
    \begin{align*}
      \dist_2(s,p)
        &= \dist'(m(s), m(p)) \because{Proposition~\ref{prop:m_is_good}(iii)}\\
        &= \dist_N'(m(s), m(p)) \because{$f(P)$ are vertices of a box}\\
        &\le \dist_N(s, p) \because{Lemma~\ref{lem:dist_N_lipschitz}}
    \end{align*}
    We have just shown that $\dist\le \dist_N$ and Lemma~\ref{lem:dist_N_le_dist} states that $\dist\ge \dist_N$, so we conclude that $\dist = \dist_N$ as desired.
  \end{proof}


\subsection{From Finite Sets to Finite Collections of Compact Path-Connected Bodies}
\label{sec:bodies}
All of our proof steps hold for finite collections of compact,
path-connected bodies in arbitrarily large dimension. Our Lipschitz map $m$ can
still be extended to a Lipschitz map $f$ in this setting, largely due to the
generality of the Kirszbraun theorem. In this case, the pre-image of the
contractive map is the set of all points belonging to some body. Meanwhile, the image is a finite set of points, the corners of a multi-dimensional box.
Thus our construction of $m$ contracts each convex
body into a single point, and the image of our
compact bodies under $f$ is still a finite point set on the corners of a
box. Therefore, the remainder of our theorem proof goes through unchanged.

This result is rather remarkable: path-connected compact sets in high
dimensional space can have extremely convoluted geometry, and the Voronoi diagrams
on these collections (on which the nearest neighbor metric depends) can be
massively complex.  The key is that our Lipschitz map is robust enough to
handle objects of considerable geometric complexity.


  \section{Persistent Homology of the Nearest-neighbor Geodesic Distance}
  \label{sec:persistence}

%
  In this section, we show how to compute the so-called persistent homology~\cite{edelsbrunner02topological} of the nearest neighbor metric in two different ways, one ambient and the other intrinsic.
  The latter relies on Theorem~\ref{thm:NN} and would be quite surprising without it.

  The input for persistence computation is a \emph{filtration}---a nested sequence of spaces, usually parameterized by a real number $\alpha\ge 0$.
  The output is a set of points in the plane called a \emph{persistence diagram} that encodes the birth and death of topological features like connected components, holes, and voids.

  \paragraph*{The Ambient Persistent Homology}

    Perhaps the most popular filtration to consider on a Euclidean space is the sublevel set filtration of the distance to a sample $P$.
    This filtration is $(F_\alpha)_{\alpha\ge 0}$, where
    \[
      F_\alpha := \{x\in \R^d \mid \distto_P(x) \le \alpha\},
    \]
    for all $\alpha \ge 0$.
    If one wanted to consider instead the nearest neighbor metric $\dist_N$, one gets instead a filtration $(G_\alpha)_{\alpha\ge 0}$, where
    \[
      G_\alpha := \{x\in \R^d \mid \min_{p \in P} \dist_N(x,p) \le \alpha\},
    \]
    for all $\alpha \ge 0$.

    Both the filtrations $(F_\alpha)$ and $(G_\alpha)$ are unions of metric balls.
    In the former, they are Euclidean.
    In the latter, they are the metric balls of $\dist_N$.
    These balls can look very different, for example, for $\dist_N$, the metric balls are likely not even convex.
    However, these filtrations are very closely related.
    \begin{lemma}
      For all $\alpha \ge 0$, $F_\alpha = G_{2\alpha^2}$.
    \end{lemma}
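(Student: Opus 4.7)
The plan is to show, for every $x\in\R^d$, the pointwise identity
\[
  \min_{p\in P}\dist_N(x,p) \;=\; 2\,\distto_P(x)^2,
\]
from which $F_\alpha = G_{2\alpha^2}$ follows immediately: indeed $x\in F_\alpha$ iff $\distto_P(x)\le\alpha$ iff $2\distto_P(x)^2\le 2\alpha^2$ iff $\min_{p\in P}\dist_N(x,p)\le 2\alpha^2$, i.e.\ $x\in G_{2\alpha^2}$. So the work is entirely in proving this pointwise formula, and it splits into the standard two inequalities.

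For the upper bound, I would pick a Euclidean nearest neighbor $p^*\in P$ of $x$, set $r := \distto_P(x) = \|x-p^*\|$, and test the straight line $\gamma(t) = (1-t)x + tp^*$ in the definition of $\dist_N$. Along this path $\|\gamma'(t)\| = r$ and $\distto_P(\gamma(t)) \le \|\gamma(t)-p^*\| = (1-t)r$, so the same one-line computation as in the two-point calculation given in the paper yields $\len(\gamma) \le \int_0^1(1-t)r\cdot r\,dt = r^2/2$, hence $\dist_N(x,p^*) \le 4\cdot r^2/2 = 2\distto_P(x)^2$.

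For the lower bound, fix any $p\in P$ and any piecewise-$C^1$ path $\gamma\in\ourpath(x,p)$, and set $h(t) := \distto_P(\gamma(t))$. The key property is that $\distto_P$ is $1$-Lipschitz on $\R^d$, so wherever $h$ is differentiable, $|h'(t)| \le \|\gamma'(t)\|$. Moreover $h(0) = \distto_P(x)$ and $h(1) = 0$ since $\gamma(1)=p\in P$. Therefore
\[
  \len(\gamma) = \int_0^1 h(t)\|\gamma'(t)\|\,dt \;\ge\; \int_0^1 h(t)\,|h'(t)|\,dt \;\ge\; \left|\int_0^1 h(t)h'(t)\,dt\right| = \tfrac{1}{2}\,\distto_P(x)^2.
\]
Multiplying by $4$ and taking the infimum over $\gamma$ and the minimum over $p$ gives $\min_{p\in P}\dist_N(x,p) \ge 2\distto_P(x)^2$, matching the upper bound.

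The main obstacle is the lower bound, specifically justifying the chain $\|\gamma'\|\ge |h'|\ge -h'$ at the level of integrals when $\gamma$ is only piecewise-$C^1$ and $\distto_P$ is only Lipschitz (so $h$ is absolutely continuous but not $C^1$ on all of $[0,1]$). This is handled routinely by noting that $h = \distto_P\circ\gamma$ is absolutely continuous, applying the fundamental theorem of calculus to the absolutely continuous function $t\mapsto \tfrac12 h(t)^2$, and using $|h'(t)|\le\|\gamma'(t)\|$ almost everywhere. Once the pointwise formula is established, the set equality $F_\alpha = G_{2\alpha^2}$ is immediate.
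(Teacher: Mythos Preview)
Your proof is correct and follows the same strategy as the paper: establish the pointwise identity $\min_{p\in P}\dist_N(x,p)=2\,\distto_P(x)^2$, with the upper bound from the straight line to a nearest neighbor and the lower bound from the $1$-Lipschitz property of $\distto_P$. The only difference is in how the lower bound is executed---the paper restricts to a constant-speed subpath of Euclidean length $\distto_P(x)$ and uses $\distto_P(\gamma(t))\ge(1-t)\distto_P(x)$, whereas you set $h=\distto_P\circ\gamma$ and integrate $h|h'|$ via the fundamental theorem of calculus; your version is a touch more careful about the absolute-continuity technicality, but the two arguments are essentially equivalent.
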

    \begin{proof}
      The key to this exercise is to observe that the nearest point $p \in P$ to a point $x$ is also the point that minimizes $\dist_N(x,p)$.
      To prove this, we will show that for any $p\in P$ and any path $\gamma\in \ourpath(x,p)$, we have $\len(\gamma)\ge \frac{1}{2}\distto_P(x)^2$.
      Consider any such $x$, $p$, and $\gamma$.
      The euclidean length of $\gamma$ must be at least $\distto_P(x)$, so we will assume that $\|\gamma'\| = \distto_P(x)$ and will prove the lower bound on the subpath starting at $x$ of length exactly $\distto_P(x)$.
      This will imply a lower bound on the whole path.
      Because $\distto_P$ is $1$-Lipschitz, we have $\distto_P(\gamma(t))\ge (1-t)\distto_P(x)$ for all $t\in [0,1]$.  It follows that
      \begin{align*}
        \len(\gamma) &= \int_0^1 \distto_P(\gamma(t))\|\gamma'(t)\|dt \\
         &  \ge \distto_P(x)^2 \int_0^1 (1-t)dt
          =  \frac{1}{2}\distto_P(x)^2
      \end{align*}
      The bound above applies to any path from $x$ to a point $p\in P$, and so,
      \[
        \dist_N(x,p) = 4 \inf_{\gamma\in \ourpath(x,p)} \len(\gamma) \ge 2\distto_P(x).
      \]
      If $p$ is the nearest neighbor of $x$ in $P$, then $\dist_N(x,p) = 2\distto_P(x)$, by taking the path to be a straight line.
      It follows that $\min_{p\in P}\dist_N(x,p) = 2\distto_P(x)$.
    \end{proof}

    The preceding lemma shows that the two filtrations are equal up to a monotone change in parameters.
    By standard results in persistent homology, this means that their persistence diagrams are also equal up to the same change in parameters.
    This means that one could use standard techniques such as $\alpha$-complexes~\cite{edelsbrunner02topological} to compute the persistence diagram of the Euclidean distance and convert it to the nearest neighbor metric afterwards.
    Moreover, one observes that the same equivalence will hold for variants of the nearest neighbor metric that take other powers of the distance.


  \paragraph*{Intrinsic Persistent Homology}
    Recently, several researchers have considered intrinsic nerve complexes on metric data, especially data coming from metric graphs~\cite{adamszek16nerve,gasparovic17complete}.
    These complexes are defined in terms of the intersections of metric balls in the input.
    The vertex set is the input point set.
    The edges at scale $\alpha$ are pairs of points whose $\alpha$-radius balls intersect.
    In the intrinsic \v Cech complex, triangles are defined for three way intersections, tetrahedra for four-way intersections, etc.

    In Euclidean settings, little attention was given to the difference between the intrinsic and the ambient persistence, because a classic result, the Nerve Theorem~\cite{borsuk48imbedding}, and its persistent version~\cite{chazal08towards} guaranteed there is no difference.
    The Nerve theorem, however, requires the common intersections to be contractible, a property easily satisfied by convex sets such as Euclidean balls.
    However, in many other topological metric spaces, the metric
    balls might not be so well-behaved.
    In particular, the nearest neighbor metric has metric balls which may take on very strange shapes, depending on the density of the sample.
    This is similarly true for graph metrics.
    So, in these cases, there is a difference between the information in the ambient and the intrinsic persistent homology.

    \begin{theorem}
      Let $P\subset \R^d$ be finite and let $\dist_N$ be the nearest neighbor metric with respect to $P$.
      The edges of the intrinsic \v Cech filtration with respect to $\dist_N$ can be computed exactly in polynomial time.
    \end{theorem}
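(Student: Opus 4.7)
The plan is to reduce the computation of the intrinsic \v{C}ech edges to computing pairwise $\dist_N$ distances, and then invoke Theorem~\ref{thm:NN} to replace each such geodesic distance with the edge-squared distance, which is computable by classical shortest-path algorithms.

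First I would establish a midpoint-type characterization: for any $p, q \in P$, the $\alpha$-balls $B_{\dist_N}(p, \alpha)$ and $B_{\dist_N}(q, \alpha)$ intersect in $\R^d$ if and only if $\dist_N(p, q) \le 2\alpha$. The forward direction is immediate from the triangle inequality for the metric $\dist_N$: if $x$ lies in both balls, then $\dist_N(p, q) \le \dist_N(p, x) + \dist_N(x, q) \le 2\alpha$. For the converse, given $\epsilon > 0$, pick a path $\gamma \in \ourpath(p, q)$ with $4 \len(\gamma) \le \dist_N(p, q) + \epsilon$ and set $s(t) := \int_0^t \distto_P(\gamma(u)) \|\gamma'(u)\|\, du$. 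By continuity of $s$ on $[0, 1]$, there is some $t^\ast$ with $s(t^\ast) = \len(\gamma)/2$. Taking $x := \gamma(t^\ast)$, the subpaths $\gamma|_{[0, t^\ast]}$ and $\gamma|_{[t^\ast, 1]}$ each have $\len$-length $\len(\gamma)/2$, so $\dist_N(p, x), \dist_N(q, x) \le 2\len(\gamma) \le (\dist_N(p, q) + \epsilon)/2$. Hence the $\alpha$-balls meet for every $\alpha > \dist_N(p, q)/2$, and the birth scale of the edge $(p, q)$ is exactly $\alpha_{pq} = \dist_N(p, q)/2$.

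Next, by Theorem~\ref{thm:NN} we have $\dist_N(p, q) = \dist_2(p, q)$ for all $p, q \in P$, so $\alpha_{pq} = \dist_2(p, q)/2$. The quantities $\dist_2(p, q)$ are computed in polynomial time by building the complete graph on $P$ with edge weights $\|p_i - p_j\|^2$ and running any all-pairs shortest-paths algorithm (for instance Floyd--Warshall in $O(n^3)$, or Dijkstra from each source). Halving each of the resulting $\binom{n}{2}$ entries yields all edge-birth scales of the intrinsic \v{C}ech filtration.

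The main obstacle is the length-metric midpoint step: one must justify the continuous reparameterization of a near-minimizing path by its $\distto_P$-weighted arc length, and handle the case where the infimum defining $\dist_N$ is not attained by a single path (which is why I work with an $\epsilon$-minimizer rather than a minimizer). Once this length-space argument is in place, the rest is a direct application of Theorem~\ref{thm:NN} and standard shortest-path algorithms, yielding polynomial running time.
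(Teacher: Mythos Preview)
Your proposal is correct and follows essentially the same approach as the paper: reduce the edge computation to knowing $\dist_N(p,q)$ via the ball-intersection-iff-distance-at-most-$2\alpha$ criterion, then invoke Theorem~\ref{thm:NN} to replace $\dist_N$ by $\dist_2$ and compute the latter by shortest paths. Your midpoint argument with $\epsilon$-near-minimizers is more carefully spelled out than the paper's one-line ``the common intersection will be the midpoint of the path,'' but the underlying idea is the same.
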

    \begin{proof}
      The statement is equivalent to the claim that $\dist_N$ can be computed exactly between pairs of points of $P$, a corollary of Theorem~\ref{thm:equality}.
      Two radius $\alpha$ balls will intersect if and only of the distance between their centers is at most $2\alpha$.
      The bound on the distance necessarily implies a path and the common intersection will be the midpoint of the path.
    \end{proof}

  \section{Relating the nearest neighbor metric to Euclidean MSTs,
Euclidean Spanners, and More}\label{sec:edge-power}
The nearest neighbor metric, as seen in Theorem~\ref{thm:NN}, is equal to
the edge-squared metric. This allows us to connect this manifold distance
to a graph distance, which we will in turn show is a generalization of
maximum-edge distance on minimum spanning trees. The results in this
section are quite simple to prove, but we nonetheless believe they are important
properties of the Nearest Neighbor metric and its variants.

The edge-squared metric on a Euclidean point set, as we recall, is defined by taking the
Euclidean distances squared and finding the shortest paths. 
We could have taken any such power $p$ of the Euclidean distances. We
will soon see that taking $p = 1$ gives us the Euclidean distance, and
finding spanners of the graph as $\lim p\rightarrow \infty$ is the
Euclidean MST problem.  
Let the $p$-power metric be defined on a Euclidean point set by taking
Euclidean distances to the power of $p$, and performing all-pairs
shortest path on the resulting distance graph.
\begin{theorem} \label{thm:1-spanner}
For all $q > p$, any $1$-spanner of the $p$-power metric is a
$1$-spanner of the $q$-power metric on the same point set
\end{theorem}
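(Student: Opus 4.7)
My plan is to isolate a single elementary lemma about sums of powers and then apply it twice. The key lemma I would state is: if $x_1, \ldots, x_m \geq 0$ and $z \geq 0$ satisfy $\sum_j x_j^p \leq z^p$ for some $p > 0$, then $\sum_j x_j^q \leq z^q$ for every $q \geq p$, and strict inequality in the hypothesis gives strict inequality in the conclusion. To prove it, normalize $z = 1$: the hypothesis forces each $x_j \leq 1$ (since $x_j^p \leq \sum_i x_i^p \leq 1$), and for $x \in [0,1]$ the function $t \mapsto x^t$ is nonincreasing, so $x_j^q \leq x_j^p$ and summing gives $\sum_j x_j^q \leq \sum_j x_j^p \leq 1 = z^q$.

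Given the lemma, I would fix a $1$-spanner $S$ of the $p$-power metric and points $a, b \in P$, and let $\pi_q = (a = q_0, q_1, \ldots, q_k = b)$ be a shortest $q$-power path in the complete graph $G$ on $P$. Each edge $(q_{i-1}, q_i)$ is itself a shortest $q$-power path between its endpoints, since subpaths of shortest paths are shortest; hence for every alternative path $(q_{i-1} = s_0, s_1, \ldots, s_\ell = q_i)$ in $G$ we have $\sum_j \|s_{j-1} - s_j\|^q \geq \|q_{i-1} - q_i\|^q$. By the contrapositive of the strict form of the lemma (with $z = \|q_{i-1} - q_i\|$), this forces $\sum_j \|s_{j-1} - s_j\|^p \geq \|q_{i-1} - q_i\|^p$ as well, so the direct edge is also the shortest $p$-power path in $G$ from $q_{i-1}$ to $q_i$. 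Because $S$ is a $1$-spanner of the $p$-power metric, there is a path $\sigma_i$ in $S$ from $q_{i-1}$ to $q_i$ whose $p$-power length equals $\|q_{i-1} - q_i\|^p$.

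Now I would apply the lemma in the forward direction to the edge lengths $x_j$ of $\sigma_i$: the equality $\sum_j x_j^p = z^p$ with $z = \|q_{i-1} - q_i\|$ yields $\sum_j x_j^q \leq z^q$, i.e., $\sigma_i$ has $q$-power length at most $\|q_{i-1} - q_i\|^q$. Concatenating $\sigma_1, \ldots, \sigma_k$ inside $S$ produces a path from $a$ to $b$ of total $q$-power length at most $\sum_i \|q_{i-1} - q_i\|^q$, which equals the shortest $q$-power distance from $a$ to $b$ in $G$. Combined with the trivial inequality $d_S^{(q)}(a,b) \geq d_G^{(q)}(a,b)$ from $S \subseteq G$, this gives the desired $1$-spanner equality. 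The only real obstacle is identifying the right lemma; once it is in hand, both the promotion from ``shortest $q$-path'' to ``shortest $p$-path'' and the $q$-power bound on the replacement path fall out at once, so the write-up should be short.
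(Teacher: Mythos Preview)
Your proof is correct, and the core inequality you isolate is exactly the one the paper uses. The organization differs: the paper argues via \emph{critical edges} (edges $uv$ for which every multi-hop path is strictly longer), observing that such edges must lie in every $1$-spanner and that the critical edges themselves already form a $1$-spanner; it then uses the contrapositive of your lemma once to show that every $q$-critical edge is $p$-critical, hence contained in $S$. Your route avoids the critical-edge machinery by directly constructing, for each edge of a shortest $q$-path, a replacement path in $S$, applying the lemma twice (contrapositive to certify that the edge is a shortest $p$-path, then forward to bound the $q$-length of its replacement). Your version is slightly longer but fully self-contained, while the paper's is terser but leans on the background fact that the set of critical edges is a $1$-spanner; either way the substance is the same power-sum monotonicity.
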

\begin{proof}
A $1$-spanner of the $q$-power metric can be made by taking
  edges $uv$ where
  \begin{align}\label{eq:span}
    \min_{p_0=u, \ldots p_k=v, k \not= 1} \sum_k ||p_i - p_{i-1}||^q > || u -
v||^q.
  \end{align}
If
  $\sum_{i=1}^k ||p_i - p_{i-1}||^q > || u - v||^q$ for any points
  $p_1, \ldots p_k$, then
  $\sum_{i=1}^k ||p_i - p_{i-1}||^p > || u - v||^p$ for any $q > p$.
  Thus, for all such edges $uv$ satisfying
  Equation~\ref{eq:span}:
  \[ \min_{p_0=u, \ldots p_k=v, k \not= 1} \sum_k ||p_i - p_{i-1}||^p > || u -
  v||^p. \] Such edges $uv$ must be included in any $1$-spanner
  of the $p$-power
  metric.
\end{proof}

 \begin{corollary}
   Let $P$ be a set of points in Euclidean space drawn i.i.d. from a Lipschitz
   probability density bounded above and below, with support on a
   smooth, compact manifold with intrinsic dimension $d$, bounded
   curvature, and
   smooth boundary of bounded curvature. Then the $k$-NN graph on $P$
   when $k = O(2^d \log n)$ is a $1$-spanner of the $p$-power
   metric for every $p \geq 2$, w.h.p.
 \end{corollary}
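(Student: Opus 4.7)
The plan is to obtain this corollary by chaining together the two results that immediately precede it. The hypotheses on $P$ match (or strictly imply) those of Theorem~\ref{thm:distribution-spanner}, so that theorem applies directly and tells us that, with high probability, the $k$-NN graph with $k = O(2^d \log n)$ is a $1$-spanner of the edge-squared metric, which is exactly the $p$-power metric at $p = 2$. This handles the base case.

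For $p > 2$, I would invoke Theorem~\ref{thm:1-spanner}, taking the $p$ there to be $2$ and the $q$ there to be the arbitrary $p \ge 2$ in the corollary's statement. That theorem asserts that any $1$-spanner of the $p$-power metric for a smaller exponent remains a $1$-spanner for every strictly larger exponent, so the same $k$-NN graph that succeeds at $p = 2$ succeeds simultaneously at every exponent in $(2, \infty)$. Crucially, the statement of Theorem~\ref{thm:1-spanner} holds pointwise on the point set, so no new probabilistic argument is needed: the single high-probability event from Theorem~\ref{thm:distribution-spanner} delivers the conclusion for the entire family $\{p : p \ge 2\}$ at once.

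Combining the two steps gives the corollary. There is no real obstacle here; the only subtle point is the quantifier order. One might worry about needing a high-probability statement that holds uniformly across an infinite range of $p$, but because Theorem~\ref{thm:1-spanner} is deterministic, the uniformity is automatic once the single random event (the $k$-NN graph being a $1$-spanner of $\dist_2$) occurs. Hence the $O(2^d \log n)$ bound on $k$ is inherited unchanged from Theorem~\ref{thm:distribution-spanner}, and the corollary follows in two lines.
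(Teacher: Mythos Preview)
Your proposal is correct and matches the paper's own proof exactly: the paper simply states that the corollary follows from combining Theorem~\ref{thm:distribution-spanner} and Theorem~\ref{thm:1-spanner}. Your observation about the quantifier order---that Theorem~\ref{thm:1-spanner} is deterministic, so the single high-probability event from Theorem~\ref{thm:distribution-spanner} suffices for all $p \ge 2$ simultaneously---is more detail than the paper gives, and is a nice clarification.
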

This follows from combining Theorem~\ref{thm:distribution-spanner} and
Theorem~\ref{thm:1-spanner}.
\subsection{Relation to the Euclidean MST problem}
  \begin{definition}
  Let the \textbf{normalized $p$-power metric} between two points in
  $\mathbb{R}^d$ be the $p$-power metric between the two points,
  raised to the $\frac{1}{p}$ power. Define the normalized $\infty$-power
  metric as the limit of the normalized $p$-power metric as $p \rightarrow \infty$.
  \end{definition}
  \begin{lemma} The Euclidean MST is a
  $1$-spanner for the normalized $\infty$-power metric.
  \end{lemma}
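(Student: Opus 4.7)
The plan is to identify the normalized $\infty$-power metric with the bottleneck (minimax) path distance, and then invoke the classical minimax property of the Euclidean MST.

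First, I would unpack the limit defining the normalized $\infty$-power metric. For any fixed sequence $p_0=a,p_1,\ldots,p_k=b$ of points, set $w_i := \|p_i - p_{i-1}\|$. A standard $\ell_p$-to-$\ell_\infty$ calculation gives
\[
  \Bigl(\sum_{i=1}^k w_i^{\,p}\Bigr)^{1/p} \;\xrightarrow[p\to\infty]{}\; \max_{1\le i\le k} w_i,
\]
since the sum is sandwiched between $(\max_i w_i)^p$ and $k(\max_i w_i)^p$. Because the set of candidate paths between any two points is finite (the vertices are a finite point set $P$), one may exchange the limit with the minimum over paths and conclude that the normalized $\infty$-power metric equals the bottleneck path distance
\[
  \dist_\infty(a,b) \;=\; \min_{p_0=a,\ldots,p_k=b}\; \max_{1\le i\le k}\|p_i-p_{i-1}\|.
\]

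Second, I would invoke the standard minimax property of the minimum spanning tree (sometimes called the bottleneck property): for any weighted graph and any two vertices $a,b$, the unique $a$-to-$b$ path in an MST minimizes the maximum edge weight over all $a$-to-$b$ paths in the complete graph on $P$. The short argument is by exchange: if some path $\pi$ had a smaller maximum edge weight than the MST path, then replacing the heaviest MST-path edge $e$ by a suitable edge of $\pi$ crossing the cut induced by removing $e$ would yield a spanning tree of strictly smaller weight, contradicting minimality. Applied with the Euclidean edge weights, this says exactly that the Euclidean MST realizes the bottleneck distance $\dist_\infty(a,b)$ for every pair $a,b\in P$.

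Putting the two pieces together, for all $a,b\in P$ the MST path achieves $\max_i\|p_i-p_{i-1}\| = \dist_\infty(a,b)$, so the shortest-path distance in the MST under the normalized $\infty$-power metric is exactly $\dist_\infty(a,b)$, i.e.\ the MST is a $1$-spanner of this metric. I expect no serious obstacle here; the only mild subtlety is justifying the interchange of $\lim_{p\to\infty}$ with $\min$ over paths, which is immediate once one notes that it suffices to consider simple paths and that there are only finitely many of those on the finite vertex set $P$.
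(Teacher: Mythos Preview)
Your proposal is correct and is exactly the argument the paper has in mind: the paper's own ``proof'' is the single sentence ``This lemma follows from basic properties of the MST,'' and the basic property in question is precisely the minimax (bottleneck) path property you invoke. Your identification of the normalized $\infty$-power metric with the bottleneck distance via the $\ell_p\to\ell_\infty$ limit, together with the standard exchange argument showing every MST path is a minimax path, fills in all the details the paper omits.
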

  This lemma follows from basic properties of the MST.
  The normalized $p$-power metrics give us a suite of
  metrics such that $p=1$ is the Euclidean
  distance and $p=\infty$ gives us the distance of the longest edge on the
  unique MST-path.  Setting $p=2$ gives the edge-squared metric, which
  sits between the Euclidean and max-edge-on-MST-path distance. 
  Theorem~\ref{thm:1-spanner} establishes
  that minimal $1$-spanners of the (normalized) $p$-power
  metric are contained in each other, as $p$ varies from
  $1$ to $\infty$. The minimal spanner for a general point set when $p=1$ is the complete graph, and
 the Euclidean MST is the minimal spanner for $p=\infty$. Thus:
  \begin{theorem} 
    For points in $\mathbb{R}^d$, every $1$-spanner of the $p$-power
    metric on that set of points contains every Euclidean MST.
  \end{theorem}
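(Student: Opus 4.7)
The plan is to fix an arbitrary edge $uv$ lying in some Euclidean MST on $P$ and show that $uv$ must appear in every $1$-spanner of the $p$-power metric. By the characterization used in Equation~\eqref{eq:span}, such an edge is indispensable precisely when every alternate path $u = p_0, p_1, \ldots, p_k = v$ with $k \ge 2$ satisfies
\[
\sum_{i=1}^k \|p_i - p_{i-1}\|^p > \|u-v\|^p,
\]
so the entire theorem reduces to proving this strict inequality for every MST edge $uv$ and every alternate path.

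My strategy is to combine the classical cut property of the MST with a simple edge-counting observation. Since $uv$ lies in some MST $T$, deleting $uv$ from $T$ partitions $P$ into components $S \ni u$ and $P \setminus S \ni v$, and $uv$ is a minimum-weight edge crossing the resulting cut; hence every other edge $xy$ with $x \in S$, $y \in P \setminus S$ satisfies $\|x-y\| \ge \|u-v\|$. Any alternate path from $u$ to $v$ must traverse this cut, so along the path some index $j$ yields $\|p_{j-1}-p_j\| \ge \|u-v\|$. To promote this to a strict inequality I would split into two cases: either the cut-crossing edge already has $\|p_{j-1}-p_j\| > \|u-v\|$, in which case its $p$-th power alone exceeds $\|u-v\|^p$; or the cut-crossing edge has length exactly $\|u-v\|$, in which case the condition $k \ge 2$ forces the path to contain at least one additional edge of strictly positive Euclidean length (consecutive vertices being distinct points of $P$), and its positive $p$-th power boosts the sum above $\|u-v\|^p$. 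Either way the required strict inequality holds, so $uv$ is in every $1$-spanner; applying this to every edge of every MST completes the argument.

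I expect the only mildly subtle point, and the main obstacle, to be handling non-unique MSTs, i.e., the case of ties in pairwise Euclidean distances. In that regime the cut property only delivers a non-strict inequality on the single cut-crossing edge, and strictness of the full sum has to be recovered via the extra-edge argument above. Once this observation is isolated, the remainder of the proof is routine bookkeeping.
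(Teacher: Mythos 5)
Your proof is correct, and it takes a genuinely different (and more elementary) route than the paper's. The paper does not argue from the cut property at all: it derives the theorem from Theorem~\ref{thm:1-spanner} (that any $1$-spanner of the $p$-power metric is also a $1$-spanner of the $q$-power metric for $q>p$) together with the assertion that the Euclidean MST is the minimal spanner at $p=\infty$, so that as $p$ ranges from $1$ to $\infty$ the minimal $1$-spanner shrinks from the complete graph down to the MST and therefore always contains it. You instead argue directly: for any MST edge $uv$, the cut obtained by deleting $uv$ from its tree makes $uv$ a minimum-weight crossing edge, any alternate $u$--$v$ path with $k\ge 2$ hops must cross that cut at an edge of Euclidean length at least $\|u-v\|$, and the extra positive-length edge forced by $k\ge 2$ makes the sum of $p$-th powers strictly exceed $\|u-v\|^p$, so $uv$ is indispensable. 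Your route has a concrete advantage: it handles ties explicitly and yields \emph{every} Euclidean MST uniformly, whereas the paper's passage through ``the'' minimal spanner at $p=\infty$ is informal and does not obviously survive non-unique MSTs (e.g., the bottleneck-distance minimal spanner of an equilateral triangle is not the union of its three MSTs). The subtlety you flag as the main obstacle is exactly the spot the paper's high-level argument leaves unaddressed, and your case split on whether the crossing edge is tied resolves it cleanly for all $p\ge 1$.
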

  \begin{corollary}
    Every $1$-spanner for the Nearest Neighbor metric
    contains every Euclidean
    MST. 
  \end{corollary}
  \subsection{Generalizing Single Linkage Clustering, Level Sets, and k-Centers
  clustering}
  If our point set is drawn from a well-behaved probability
  density, then the normalized edge-power metrics
  converge to a nice geodesic distance detailed
  in~\cite{hwang2016}. When $p=1$, clustering with this metric is
  the same as Euclidean metric clustering ($k$-means,
  $k$-medians, $k$-centers), and when $p=\infty$, clustering with
  this metric is the same as the single-linkage clustering and the widely
used level-set method~\cite{Wishart69, Gower1969, Ester1996,OPTICS96}.
Thus, clustering with normalized edge-power metrics generalizes these two
very popular methods, and interpolates between their advantages.
Definitions of the level-set method and a full discussion are contained in
Appendix~\ref{ap:clustering-link}


  \section{Spanners for the nearest neighbor metric}
In this section, we prove our theorems on spanners of the nearest neighbor
metric. The proofs of these theorems mostly leverage Theorem~\ref{thm:NN}
and past work on geometric spanners. We have nonetheless included them
for completeness, and to illustrate that spanners of
manifold distances like the nearest neighbor metric can have interesting
properties not found in Euclidean spanners (assuming no Steiner points).
\subsection{Exact-spanners of nearest neighbor metric in
the Probability Density Setting
}\label{sec:distribution-spanner}

Theorem~\ref{thm:distribution-spanner} states that for $k= O(2^d \log n)$, the $k$-NN graph of $n$ points drawn i.i.d from
a nicely behaved probability distribution is a $1$-spanner of the
nearest neighbor metric. This section is dedicated to outlining a proof of
this Theorem, the full result which will be in
Appendix~\ref{ap:distribution-spanner}.
This result is clearly impossible for Euclidean distances,
whose $1$-spanner is the complete graph almost surely.  
Our theorem implies any off-the-shelf
$k$-nearest neighbor graph generator can compute
edge-squared metric.  We strongly rely on Theorem~\ref{thm:NN} for this
result, and the fact that Gabriel graphs are $1$-spanners of the
edge-squared metric.

First, let us assume that the support of our
probability density $D$ has the same dimension as our ambient space.
This simplifies our calculations without changing the problem
much. Then, we note that as our number of sample points get
large, the density inside a $k$-NN ball around any point $x$ (the ball with radius
$k^{th}$-NN
distance, center at $x$) looks like the uniform distribution
on that ball, possibly intersected with a halfspace. The bounding
plane of our halfspace
represents the boundary of our density $D$.

For simplicity in the outline, let's suppose that $D$ is convex.
If we condition on
the radius of the $k$-NN ball, then the $k-1^{st}$ nearest
neighbors of $x$ are distributed roughly according to the above
distribution, described by the ball intersected with a halfspace.
For any other point $p$ in $D$, we project $p$ onto the $k$-NN
ball to point $p'$, and show that the ball $p'x$ contains a $k^{th}$ nearest
neighbor w.h.p, when $k=O(2^d \log n)$. This
implies ball with diameter $px$ contains a $k^{th}$ nearest
neighbor of $x$, and thus $px$ is not necessary in any
$1$-spanner of the edge-squared metric. Then we take union bound
over all $x$.
A rigorous proof of
Theorem~\ref{thm:distribution-spanner} requires careful analysis, 
and is contained in
Section~\ref{ap:distribution-spanner}.  
Our proof can be tweaked to show:
\begin{theorem}
  Given a Lipschitz distribution bounded above and below with support on convex set $C \subset
  \mathbb{R}^d$, the $k$-NN graph is Gabriel w.h.p. for $k =
  O(2^d \log n)$.
\end{theorem}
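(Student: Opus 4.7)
I would prove the contrapositive: with high probability every pair $(p,q)\in P\times P$ with $\|p-q\| > r_k(p)$ fails the Gabriel condition (the ball with diameter $pq$ contains another point of $P$). This implies the Gabriel graph is contained in the $k$-NN graph, which is the content of the theorem.

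First, a Chernoff bound on $|P\cap B(p,r)|$ using the density bounds $c'\le \rho\le c$ shows that with probability $1-n^{-\Omega(1)}$, every $p\in P$ satisfies $r_k(p)\ge R_0 := C_0 (k/n)^{1/d}$ for an appropriate constant $C_0 > 0$. Hence it suffices to rule out ``long'' Gabriel edges at distance $r \ge R_0$.

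The heart of the proof is a direct per-pair computation. Fix indices $i\neq j$, condition on $p_i=p$ and $p_j=q$ at distance $r:=\|p-q\|\ge R_0$. The remaining $n-2$ samples are i.i.d.\ from the density. The ball $B_{pq}$ with diameter $pq$ has volume $V_d r^d/2^d$ (the key $2^d$ factor). Since $p,q\in C$ and $C$ is convex, the midpoint $(p+q)/2$ lies in $C$, and a short convex-geometry argument gives $\mathrm{vol}(B_{pq}\cap C)\ge \kappa\, V_d r^d/2^d$ for some constant $\kappa > 0$ depending only on the geometry of $C$. Combined with $\rho\ge c'$,
\[
  \mu(B_{pq}) \;\ge\; c'\kappa V_d r^d / 2^d,
\]
so the probability that no point of $P\setminus\{p,q\}$ lies in $B_{pq}$ is at most $\exp\bigl(-c'\kappa V_d (n-2)r^d/2^d\bigr)$. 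For $r\ge R_0$ this exponent is $\Omega(k/2^d)$; choosing $k = \bar C\cdot 2^d \ln n$ with $\bar C$ sufficiently large makes the per-pair failure probability at most $n^{-C_1}$ for any desired $C_1$. A union bound over the $\le n^2$ ordered pairs, combined with the $r_k$ concentration event, then completes the argument.

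The main obstacle is establishing the convex-geometry constant $\kappa$: a uniform lower bound on $\mathrm{vol}(B_{pq}\cap C)/\mathrm{vol}(B_{pq})$ over all $p,q\in C$. For a convex body containing a ball of radius comparable to its diameter this is elementary, but the general convex case requires care, particularly when $p$ or $q$ lies near the boundary of $C$. An alternative that avoids $\kappa$ entirely is to discretize directions $\theta = (q-p)/\|q-p\|$ over a net of $S^{d-1}$ and use the inclusion $B(p+(r_k/2)\theta,\, r_k/2)\subseteq B_{pq}$ (which is valid whenever $\|p-q\|\ge r_k$, since both Gabriel balls are tangent to $p$ in direction $\theta$ and radii grow with distance): the $k$-nearest neighbors of $p$ are approximately uniform in $B(p,r_k)\cap C$, so each such tangent ball receives an expected $\Theta(k/2^d) = \Theta(\ln n)$ neighbors and is non-empty with high probability; a union bound over the direction net then gives the result with the same $k = O(2^d \ln n)$.
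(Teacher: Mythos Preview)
Your proposal is correct, and your second sketch is close in spirit to what the paper does, but your primary approach takes a genuinely different route. You lower-bound $\mu(B_{pq})$ directly via a shape constant $\kappa(C)$ and then run Chernoff on the $n-2$ remaining samples; this works (for a compact convex body with nonempty interior, $\kappa(C)>0$ follows from the cone-to-an-inscribed-ball argument you allude to), but the hidden constant in $k=O(2^d\log n)$ then carries a factor $1/\kappa(C)$, i.e.\ it depends on the aspect ratio of $C$. The paper instead conditions on the $k$-NN radius $r_B$ of a center $O$, so that the $k{-}1$ nearest neighbors are approximately uniform in $B\cap C$, and compares $\vol(B_{Q'O}\cap C)$ to $\vol(B\cap C)$ rather than to $\vol(B_{Q'O})$. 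The key point is that convexity forces $Q'$ (the projection of $Q$ onto $\partial B$, which lies on the segment $OQ$) to sit inside $C$; dilating $B_{Q'O}\cap C$ by a factor~$2$ about $Q'$ then covers $B\cap C$, since the small ball maps onto $B$ and $C$ maps to $2C-Q'\supseteq C$. This yields $\vol(B_{Q'O}\cap C)\ge 2^{-d}\vol(B\cap C)$ with no shape constant whatsoever, so the leading constant in $k$ depends only on the density bounds. Your second approach is essentially this conditioning argument, but the direction net is unnecessary---a union bound over the at most $n^2$ actual pairs $(O,Q)$ suffices, and for those the corresponding $Q'$ automatically lies in $C$---and the asserted $\Theta(k/2^d)$ occupancy of each tangent ball is precisely the statement that the dilation step is needed to justify near $\partial C$.
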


\subsection{Fast, Sparse Spanner for the Edge-Squared
Metric}
\label{sec:general-spanner}

Now we outline a proof for Theorem~\ref{thm:general-spanner}, which shows
that one can construct a $(1+\eps)$ nearest neighbor metric spanner of size
$O(n \eps^{-d/2})$ in
time $O\left(n \log n + n \eps^{-d/2} \log \left(\frac{1}{\eps}\right)\right)$, 
for points in constant dimensional space. The full proof is in
Appendix~\ref{ap:general-spanner}. We critically rely on
Theorem~\ref{thm:NN} for this work, which shows
a spanner for the edge-squared metric is equivalent to a spanner for the
nearest neighbor metric.

Note that this spanner is sparser and faster in terms of epsilon dependency
than the theoretical optimal spanner for Euclidean distances~\cite{Le19}.
We rely extensively on well-separated pair decompositions
(WSPDs),
and this outline assumes familiarity with that notation.
For a comprehensive set of definitions and notations on well separated
pairs, refer to any of \cite{Callahan1995, Arya2016, Callahan1993,
arya95euclid}.  
Our proof consists of three parts.
\begin{enumerate}
\item Showing that connecting a $(1+O(\delta^2))$-approximate shortest edge
in a $1/\delta$ well separated pair for all the pairs in the decomposition
gives a $1+O(\delta^2)$ edge-squared spanner.
The processing for this step takes $O(n \log n + \delta^{-d}n)$ time.
\item Previous work contains an algorithm computing
  $1+O(\delta^2)$-approximate shortest edge in a $1/\delta$ well
    separated pair for all the pairs in a WSPD, and takes
    $O(1)$ time per pair. The pre-processing for this step will be
    bounded by $O(\delta^{-d}n\log\left(\frac{1}{\delta}\right))$ time. The $\log\left(\frac{1}{\delta}\right)$ factor goes away given a fast floor function. 
    This procedure was first introduced in~\cite{Callahan1995}.

\item Putting these two together, and setting $\epsilon = \delta^2$
gives us a $1+\epsilon$ spanner with
$O(\epsilon^{-d/2}n)$ edges in
    $O(n \log n + \epsilon^{-d/2}n)$ time.
\end{enumerate}
Full details of this proof are contained in Appendix~\ref{ap:general-spanner}

  \section{Conclusions and Open Questions}\label{sec:conclusions}

We examined the nearest neighbor metric and showed how to compute it
exactly, as well as find sparse data structures efficiently for approximate
computation in practice.  Many problems remain open.

First: are there generalizations of these metrics, for which our proof techniques
will still hold? The nearest neighbor metric has many natural
generalizations, including the $k^{th}$ nearest neighbor or powers of the
nearest neighbor function.

Can we efficiently compute $o(\log n)$-spanners of the nearest neighbor
metric in high dimension, such the the spanners have a nearly linear number of edges?
The existence of such spanners has been studied for Euclidean metrics in~\cite{HarPeled13}, where the stretch
obtained is $\sqrt{\log n}$.

Does computing $k$-NN graphs with approximate nearest neighbor methods give $1$-spanners of the
edge-squared metric with high probability?
Approximate nearest neighbors have been studied extensively
~\cite{kNNsurvey, Chen11, Dong11}, including locality-sensitive
hashing for high dimensional point sets~\cite{LSH} and
more~\cite{Laarhoven2018}.
Recent work by Andoni et al.~\cite{Andoni2018} showed how to compute
approximate nearest neighbors for any non-Euclidean norm.  Perhaps there
is a rigorous theory about data-sensitive metrics generated from any such
norm? Similar to
how the edge-squared metric is generated from the Euclidean distance.

It remains an open question how well clustering or
classification with
nearest neighbor metrics performs on real-world
data.
Experiments have been done by Bijral, Ratliff, and
Srebro in~\cite{bijral11semiSupLearningDBD}.
Theorem~\ref{thm:distribution-spanner} implies that future experiments
can be done using any
k-nearest-neighbor graph.  
  We believe that the interest in alternative metrics on Euclidean data will continue to be a rich source of interesting problems.


  \appendix
    \section{Nearest Neighbor Metric and Edge-Power Metrics relate to Single Linkage Clustering, Level Sets, and k-Centers
  clustering}\label{ap:clustering-link}

  Many popular clustering algorithms, including $k$-centers, $k$-means,
  and $k$-medians clustering, use Euclidean distance as a measure of
  distance between points in $\mathbb{R}^d$. These methods are useful
  when clusters are spherical and well-separated. However, it is believed by
  practitioners that
  data-sensitive distances more accurately capture intrinsic
  distances between data \cite{alamgir12shortest}.

  The celebrated single-linkage clustering algorithm~\cite{Gower1969, Yaro2017},
  which is clustering based on an MST, is a widely used tool in
  machine learning, and gets around many of the problems of the
 Euclidean distance clustering. In single-linkage clustering,
   two points are considered similar if the maximum length
   edge on the path between them in the MST is small. This turns out to be equivalent to 
  computing the normalized $\infty$-power metric between the two
  points. Therefore, single linkage clustering can be seen
  as clustering using the normalized $\infty$-power metric.
  Generally, normalized $p$-power metrics can be seen as an
  intermediary between Euclidean distances ($1$-power metrics) and
  Euclidean MST-based clustering.

  Clustering with $p$-power metric relates to another popular
 clustering method in machine learning, known as level-set clustering.
  Loosely speaking, level set clustering involves finding an estimate
  for the probability density that points are drawn from, finding a
  cut threshold $t$, and then taking as clusters all regions with
  probability density $ > t$. Level set clustering
  has appeared in many incarnations~\cite{Wishart69, Stuetzle2003, Stuetzle2007}, including the celebrated and
  widely used DBScan method~\cite{Ester1996} and its
  considerable number of variations~\cite{OPTICS96}.
It is known
  that level-set clustering is related to single-linkage clustering, as
  the latter is an approximation of the former~\cite{Wishart69,
  Stuetzle2007}.  
  Level-set methods
  have the advantage that they can find arbitrarily shaped
  clusters~\cite{Ester1996}, but can cause two points that are very
  close in Euclidean distance to be considered far apart. 

  Clustering with the $p$-power metric incorporates the
  advantages of both Euclidean distance clustering and level set
  clustering, as it is both data-sensitive and takes into account
  overall
  Euclidean distance between two points.
  Here, $p$ can be toggled to change the sensitivity of the metric
  to the underlying density.
  As the number of samples
  drawn from our probability density grows large, it has been proven that the behavior of
  normalized $p$-power metrics converges to a natural geodesic distance on the
  underlying probability density~\cite{hwang2016}. Clustering with
  this geodesic distance for $p=1$ is exactly Euclidean clustering, and for
  $p=\infty$ is exactly the level set method. 
   Thus, clustering with $p$-power metric converges to
  a clustering method that smoothly interpolates between
  Euclidean-distance clustering and level set clustering. 

\section{Proving Faster and Sparser-than-Euclidean Approximate
  Spanners}\label{ap:general-spanner}
In this appendix, we finish the proof of
Theorem~\ref{thm:general-spanner} based on the outline given in
Section~\ref{sec:general-spanner}.

\subsection{$1+O(\delta^2)$ spanners can be generated from a
$1/\delta$ WSPD}

\begin{definition}\label{def:critical} Let $e$ be a \textbf{critical} edge in a
  shortest path metric on any graph if the
  (possibly-not-unique) shortest path between the endpoints of $e$ is the edge $e$.
\end{definition}
\begin{lemma} The set of critical edges on any graph forms a $1$-spanner of the
shortest path metric. \end{lemma}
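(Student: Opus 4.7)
The plan is to show that any shortest path in $G$ uses only critical edges. Once that is established, the inclusion $S \subseteq G$ trivially gives $d_S(x,y) \ge d_G(x,y)$, while the existence of an $x$-to-$y$ shortest path lying entirely inside $S$ gives the matching upper bound $d_S(x,y) \le d_G(x,y)$, establishing that $S$ is a $1$-spanner.

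Concretely, I would fix any pair $x, y$ and let $x = v_0, v_1, \ldots, v_k = y$ be a shortest path in $G$. The key step is a one-line exchange argument: for each edge $e_i = (v_{i-1}, v_i)$ on the path, the subpath of the chosen shortest path from $v_{i-1}$ to $v_i$ is just the single edge $e_i$, and this subpath must itself be a shortest path from $v_{i-1}$ to $v_i$ in $G$ (otherwise we could splice in a strictly shorter $v_{i-1}$-to-$v_i$ detour and contradict optimality of the original path). Hence the weight of $e_i$ equals $d_G(v_{i-1}, v_i)$, so $e_i$ meets Definition~\ref{def:critical} and belongs to the critical edge set.

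Therefore every edge of the chosen $x$-$y$ shortest path is in $S$, so the same path is available in $S$, giving $d_S(x,y) \le d_G(x,y)$; combined with the trivial inequality from $S \subseteq G$, this yields $d_S(x,y) = d_G(x,y)$, as required.

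There is no real obstacle here: the lemma is essentially a restatement of the classical fact that shortest paths have the optimal-substructure property. The only mild subtlety is the parenthetical ``possibly-not-unique'' in Definition~\ref{def:critical}, which is exactly what lets the exchange argument go through when a single $v_{i-1}$-$v_i$ pair admits multiple shortest paths of equal length; without that allowance one would need to pick a canonical shortest path to avoid circular reasoning.
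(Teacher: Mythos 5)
Your proof is correct, and it is the standard optimal-substructure argument: for a shortest $x$-$y$ path $v_0,\ldots,v_k$ in a graph with non-negative weights, each edge $e_i=(v_{i-1},v_i)$ must satisfy $w(e_i)=d_G(v_{i-1},v_i)$, since otherwise replacing $e_i$ by a strictly shorter $v_{i-1}$-$v_i$ path yields a strictly shorter $x$-$y$ walk, and hence (weights being non-negative) a strictly shorter $x$-$y$ path, contradicting optimality; thus every edge of the chosen shortest path is critical and the path survives into $S$. The paper does not actually supply a proof --- it states the lemma and remarks only that it is ``known in the literature'' --- so there is no alternative route to compare against; yours is the argument the authors surely have in mind. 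One small calibration on your closing remark: the parenthetical ``possibly-not-unique'' in Definition~\ref{def:critical} should be read as defining $e$ to be critical whenever $w(e)$ equals the shortest-path distance between its endpoints (i.e., $e$ is \emph{a} shortest path, not necessarily \emph{the} shortest path), and this weaker condition is exactly what your exchange step establishes; under the stricter reading requiring uniqueness, picking a canonical shortest path would not obviously rescue the lemma, and a genuinely different argument would be needed.
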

The above lemma is known in the literature.

To check that any graph $H$ is a $(1+O(\delta^2)$ spanner of any graph $G$, it suffices
to prove that all critical edges in the edge-squared metric have a stretch
no larger than $1+O( \delta^2)$.
Let $G$ be the edge-squared graph arising from points $P \subset \mathbb{R}^d$. Build a well-separated pair decomposition on P, with pairs given as $\{A_1, B_1\}, \{A_2, B_2\}, \ldots \{A_m, B_m\}$.
Create a spanner $H$ as follows: for each pair $\{A_i, B_i\}$, connect
an edge $\{a, b\}, a\in A_i, b\in B_i$ such that the Euclidean distance
between $a$ and $b$ is a $(1+c\delta^2)$ approximation of the shortest distance
between point sets $A_i$ and $B_i$, for some constant $c$ independent of
$i$. This can be accomplished in $O(1)$ time assuming a preprocessing
step of $O(\delta^{-d} \log\left(\frac{1}{\delta}\right)$ time, as noted
in Callahan's paper on constructing a Euclidean MST~\cite{Callahan1995}. Do this for all $ 1 \leq i \leq m$.

For each critical edge $(s,t)$, consider the well-separated pair $\{A, B\}$ that $(s,t)$ is part of. Let $ s\in A$ and $t \in B$. Let $(a,b)$ be a $(1+c\delta^2)$-approximate shortest edge between $A$ and $B$ ($a \in A, b\in B$).
Scale $||a-b||_2$ to be 1. $A$ and $B$ have Euclidean radius at most
$\delta$, by the definition of a well separated pair. By induction on
Euclidean distance, $H$ is an edge-squared $2$-spanner of the
edge-squared metric for all points in $A$ and $B$ and all points in $B$
(assuming sufficiently small $\delta$).

\begin{lemma}
\begin{align*}
  dist_H(s,t)
    &\leq dist_H(s,a)+dist_H(a,b)+dist_H(b,t)\\
    & \leq 1+O(\delta^2)
\end{align*}
\end{lemma}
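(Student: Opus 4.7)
The plan is to verify both inequalities in the lemma chain, with most of the work going into the second one. The first inequality, $dist_H(s,t) \leq dist_H(s,a) + dist_H(a,b) + dist_H(b,t)$, is just the triangle inequality for shortest-path distances in $H$: concatenating an $s$-$a$ path, the edge $(a,b)$, and a $b$-$t$ path yields a valid $s$-to-$t$ walk in $H$.

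For the second inequality I would bound each of the three summands separately. First, since the edge $(a,b)$ was explicitly added to $H$ when processing the WSPD pair $\{A,B\}$, we have $dist_H(a,b) \leq \|a-b\|^2 = 1$ under the chosen scaling. Next, both $s$ and $a$ lie in $A$, which has Euclidean radius at most $\delta$ by the $1/\delta$-well-separated property, so $\|s-a\| \leq 2\delta$ and $dist_G(s,a) \leq \|s-a\|^2 \leq 4\delta^2$. The inductive hypothesis (that $H$ is a $2$-spanner on strictly smaller Euclidean scales) then gives $dist_H(s,a) \leq 2\cdot dist_G(s,a) \leq 8\delta^2$, and symmetrically $dist_H(b,t) \leq 8\delta^2$. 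Summing yields $dist_H(s,t) \leq 8\delta^2 + 1 + 8\delta^2 = 1 + O(\delta^2)$.

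The main obstacle is setting up the induction cleanly. The recursion is on Euclidean scale: the pairs $\{s,a\} \subset A$ and $\{b,t\} \subset B$ have diameter at most $2\delta$, which is strictly less than $1 \approx \|a-b\|$ once $\delta$ is small. Thus when we invoke the inductive hypothesis for $dist_H(s,a)$ and $dist_H(b,t)$, we are applying it to points separated by a genuinely smaller WSPD scale, with the base case being pairs where $A$ or $B$ is a singleton and both distances vanish. A subtlety worth flagging is that the inductive hypothesis only asserts a loose $2$-spanner bound while the conclusion we are aiming for at the current scale is the much tighter $(1+O(\delta^2))$-spanner bound; this is a standard bootstrapping step, and it works because the smaller-scale distances contribute only $O(\delta^2)$ additively to the current-scale estimate regardless of whether the inductive constant is $2$ or $1+O(\delta^2)$.

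To then conclude the spanner property, one combines the lemma with the lower bound $\|s-t\|^2 \geq (1+c\delta^2)^{-2} \geq 1 - O(\delta^2)$, which holds because $(a,b)$ is a $(1+c\delta^2)$-approximate shortest edge of the pair and $(s,t)$ is some crossing pair, so $\|a-b\| \leq (1+c\delta^2)\|s-t\|$. Since $(s,t)$ is a critical edge, $dist_G(s,t) = \|s-t\|^2$, and dividing yields a stretch of $(1+O(\delta^2))/(1-O(\delta^2)) = 1 + O(\delta^2)$, completing the $(1+O(\delta^2))$-spanner proof after choosing $\eps = \delta^2$ as indicated in the outline.
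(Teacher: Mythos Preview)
Your proof is correct and follows essentially the same route as the paper: triangle inequality for the first bound, then $dist_H(a,b)\le 1$ from the explicitly added edge, and $dist_H(s,a),\,dist_H(b,t)\le 8\delta^2$ via the inductive $2$-spanner hypothesis together with the radius-$\delta$ bound on $A$ and $B$. Your additional paragraph on the stretch computation also matches the paper's subsequent lemma, though note the paper states the lower bound as $dist_G(s,t)\ge (1+c\delta^2)^{-1}$ rather than your $(1+c\delta^2)^{-2}$; your version is the one that follows cleanly from squaring the Euclidean approximation guarantee, and either form yields the same $1+O(\delta^2)$ stretch.
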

\begin{proof}
We know $dist_H(a,b)$ = 1 by our scaling, and $$dist_H(s,a) \leq 2 \cdot (dist_G(s,a)) \leq 2 \cdot ||s-a||^2 \leq 8\delta^2$$
The first inequality follows by the inductive hypothesis that $H$ is a 2-spanner of $G$ in $A$. The third inequality follows since both $s$ and $a$ are contained in a ball of radius $\delta$.

The same bound applies for $dist_H(b,t)$.
\end{proof}

\begin{lemma} \label{lm}
$$(1+c\delta^2)(dist_G(s,t)) \geq dist_G(a,b) = 1$$
$$ \Rightarrow dist_G(s,t)  \geq \frac{1}{1+c\delta^2}$$
\end{lemma}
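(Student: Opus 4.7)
The plan is to observe that the stated implication is a one-line algebraic rearrangement, and that the content of the lemma really lies in verifying its hypothesis from the surrounding setup (the critical edge $(s,t)$ in the WSPD pair $\{A,B\}$ with $s \in A$, $t \in B$, and the $(1+c\delta^2)$-approximate shortest cross-edge $(a,b)$ scaled so that $\|a-b\|_2 = 1$). The implication itself I would dispatch in a single step: dividing $(1+c\delta^2)\,dist_G(s,t) \geq 1$ by the positive constant $1+c\delta^2$ yields $dist_G(s,t) \geq \frac{1}{1+c\delta^2}$, which is the conclusion.

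For the hypothesis, I would chain three observations. First, by Definition~\ref{def:critical}, the critical edge $(s,t)$ satisfies $dist_G(s,t) = \|s-t\|_2^2$, since the direct single-edge path realizes the minimum. Second, the $(1+c\delta^2)$-approximation property of $(a,b)$ gives $\|a-b\|_2^2 \leq (1+c\delta^2) \cdot \min_{a' \in A,\, b' \in B}\|a'-b'\|_2^2 \leq (1+c\delta^2)\,\|s-t\|_2^2$, since $(s,t)$ is one admissible cross-pair between $A$ and $B$. Third, the scaling $\|a-b\|_2 = 1$ gives $\|a-b\|_2^2 = 1$, so chaining yields $1 \leq (1+c\delta^2)\,dist_G(s,t)$, which is precisely the hypothesis. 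The equality $dist_G(a,b) = 1$ in the premise is the upper bound $dist_G(a,b) \leq \|a-b\|_2^2 = 1$ combined with the observation that no concatenation of squared edges can strictly beat the direct edge $(a,b)$ when $(a,b)$ is already approximately the closest cross-pair.

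The only potentially delicate point is the direction $dist_G(a,b) \geq 1$: a strictly shorter edge-squared path from $a$ to $b$ would contradict $(a,b)$'s status as an approximate closest cross-pair, since any such multi-edge path traverses a sequence whose total squared length would fall below $1$, contradicting the approximation guarantee for small $\delta$. I do not expect a real obstacle here; the whole lemma is a short chain of inequalities terminating in a division. Combining it with the upper bound $dist_H(s,t) \leq 1 + O(\delta^2)$ from the previous lemma then gives stretch at most $(1+O(\delta^2))(1+c\delta^2) = 1 + O(\delta^2)$, completing the spanner argument.
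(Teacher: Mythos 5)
Your approach matches the paper's: the paper's own proof is a single sentence ("Lemma \ref{lm} follows from the fact that $(a,b)$ is a $(1+c\delta^2)$ approximate shortest distance between $A$ and $B$"), and you are simply elaborating that same reasoning---the hypothesis of the stated implication is exactly the approximation guarantee combined with the criticality of $(s,t)$ and the scaling $\|a-b\|=1$, after which the implication is a division by $1+c\delta^2>0$. Your main chain (criticality gives $dist_G(s,t)=\|s-t\|^2$, the approximation gives $\|a-b\|^2\le(1+c\delta^2)\|s-t\|^2$, the scaling gives $\|a-b\|^2=1$, hence $(1+c\delta^2)\,dist_G(s,t)\ge 1$) is correct and already yields the conclusion.

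The one place your argument wobbles is the claim that $dist_G(a,b)=1$. You assert that a strictly shorter edge-squared path from $a$ to $b$ would contradict $(a,b)$ being an approximately closest Euclidean cross-pair, but this does not follow: a shorter edge-squared path can route through intermediate points of $P$ that lie outside $A\cup B$ (a point near the midpoint of segment $ab$, say, gives a two-edge path of edge-squared length near $1/2$), and such a path implies nothing about the existence of a shorter Euclidean cross-pair between $A$ and $B$. Fortunately this equality is not actually needed, since your chain establishes $1=\|a-b\|^2\le(1+c\delta^2)\,dist_G(s,t)$ directly, which is all the conclusion requires; the "delicate point" paragraph should be dropped rather than repaired. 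A small side remark: the paper phrases the approximation guarantee in terms of Euclidean distance rather than squared distance, so squaring introduces a factor $(1+c\delta^2)^2$ rather than $(1+c\delta^2)$, but this is absorbed into the unspecified constant $c$ and does not affect the $1+O(\delta^2)$ stretch.
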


Lemma \ref{lm} follows from the fact that $(a,b)$ is a $(1+c\delta^2)$ approximate shortest distance between $A$ and $B$.

Therefore

\begin{align*}
  stretch_H(s,t)
    &\leq \frac{dist_H(s,t)}{dist_G(s,t)} \\
    &\leq (1+16\delta^2)(1+c\delta^2) \\
    &= 1+O(\delta^2)
\end{align*}

Thus we have proven that $H$ is a $1+16\delta^2$ spanner. Now set
$\epsilon = \delta^2$, which completes
proof of Theorem~\ref{thm:general-spanner}.

\section{Spanners in the Probability Density Setting: Full
  Proof}\label{ap:distribution-spanner}

We prove Theorem~\ref{thm:distribution-spanner} in full.
Through this section, we assume that $D$ is a probability density
function with support on smooth connected compact manifold with intrinsic dimension
$d$ embedded in ambient space $\mathbb{R}^s$,
with smooth boundary of bounded curvature. This
probability density function is further assumed to be bounded
above and below, and to be Lipschitz. For simplicity, we assume that
$s=d$, and we can prove all our results when $s > d$ by taking
coordinate charts from the manifold into Euclidean space.  We
will show at the end of the section that if the distribution is
supported on a convex set of full dimension in the ambient space,
then the $k$-NN graph is Gabriel for the same $k$. It is not
difficult to see that Gabriel graphs are $1$-spanners of the
edge-squared metric~\cite{SridharMaster}.

\begin{lemma}
  Let $\M$ be a compact object in $\mathbb{R}^d$, whose
  boundary is a smooth manifold of dimension $d-1$ with bounded
  curvature.
  Let $\B$ be any ball with sufficiently small radius
  $r_B$ with center in $\M$, that intersects the boundary of $D$
  at some point $x$.
  Let $H$ be the halfspace tangent to $\M$ at $x$ containing the center
  of the ball.

For any point $Q \in \M$, let $Q'$ be the point in $B$
closest to $Q$. If $d(Q', H) / r_B > c$ for arbitrary constant $c$,
then $d(Q, H) \geq c'$ for some constant $c'$.
\end{lemma}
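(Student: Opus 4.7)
The plan is to introduce a signed distance to the tangent hyperplane $\partial H$ and exploit the fact that $Q'$, being the closest point of $B$ to $Q$, lies on the segment from the center $c_B$ of $B$ to $Q$. Since the signed distance is an affine function on $\R^d$, moving further along that segment can only push the point further away from $H$ in the direction that matters.

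Concretely, I will let $n$ be the unit normal to $\partial H$ pointing out of $H$ and define $f(P) := \langle P - x, n \rangle$. Then $f$ is affine, $f(x) = 0$, $H = \{f \leq 0\}$, and $d(P, H) = \max(f(P), 0)$. The hypothesis $d(Q', H)/r_B > c$ becomes $f(Q') > c\,r_B$, while $c_B \in H$ gives $f(c_B) \leq 0$.

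The argument then splits on whether $Q$ lies in $B$. If $Q \in B$, then $Q' = Q$ and there is nothing to do: $d(Q, H) = d(Q', H) > c\,r_B$. Otherwise $Q'$ lies on $\partial B$, on the open segment from $c_B$ to $Q$, so $Q = c_B + \lambda(Q' - c_B)$ for some real $\lambda \geq 1$. Affinity of $f$ yields
\[
f(Q) \;=\; (1-\lambda)\,f(c_B) \;+\; \lambda\,f(Q').
\]
The first summand is a product of two nonpositive quantities and is hence nonnegative; the second satisfies $\lambda f(Q') \geq f(Q') > c\,r_B$ since $\lambda \geq 1$ and $f(Q') > 0$. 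Adding gives $f(Q) > c\,r_B$, so $d(Q, H) > c\,r_B$, and the lemma holds with $c' := c\,r_B$.

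I do not expect a serious obstacle: the heart of the argument is the one-line observation that pushing a point farther along a ray that already exits $H$ only takes it deeper outside $H$. Smoothness of $\partial\M$ is implicitly used only in that it lets us speak of a well-defined tangent halfspace $H$ at $x$, and the bounded-curvature and ``sufficiently small $r_B$'' hypotheses appear to serve the surrounding density-estimation argument for Theorem~\ref{thm:distribution-spanner} rather than this particular lemma.
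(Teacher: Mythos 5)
Your affine computation is correct as far as it goes, but it proves a conclusion that degenerates precisely in the regime where the lemma is used, and it misreads which hypotheses are load-bearing. You conclude $d(Q,H) > c\,r_B$ and declare $c' := c\,r_B$ to be the required constant. But $r_B$ is the $k$-NN ball radius, which tends to $0$ as $n\to\infty$; the surrounding theorem needs the dichotomy ``either $d(Q',H)/r_B \le c'$ or $|QO| > c$ for an \emph{absolute} constant $c$,'' so a lower bound proportional to $r_B$ is exactly what one must \emph{not} settle for. Your closing remark that the bounded-curvature and small-$r_B$ hypotheses ``serve the surrounding argument rather than this lemma'' is backwards: they are the crux of this lemma, and the one-line proof in the paper (``a basic fact about smoothness and bounded curvature'') is pointing at them.

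Here is the missing piece. Your own computation gives more than you used: with $\lambda = |Q - c_B|/r_B$ rather than merely $\lambda \ge 1$, you have $f(Q) \ge \lambda f(Q') > c\,|Q - c_B|$. Now invoke bounded curvature: since $\partial\M$ is smooth with curvature bounded by some $C$ and $H$ is the tangent halfspace at $x$, there is a constant $\rho_0$ such that any $Q\in\M$ with $|Q - x| \le \rho_0$ satisfies $f(Q) \le C\,|Q-x|^2$ (the boundary, and hence $\M$ on the outside of $H$, deviates from the tangent plane only quadratically). Combining with $|Q - c_B| \ge |Q - x| - r_B$ gives $C\,|Q-x|^2 > c\,(|Q-x| - r_B)$; for $r_B$ sufficiently small relative to $c/C$ this quadratic inequality forces either $|Q-x| = O(r_B)$ (which the hypothesis $f(Q') > c\,r_B$ rules out, since then $f(Q) \le C\,|Q-x|^2 = O(r_B^2) \ll c\,r_B$) or $|Q - x| > c/(2C)$, an absolute constant depending only on $c$ and the curvature bound. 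Either way one obtains $|Q - x|$, and hence $|QO|$ and $d(Q,H)$, bounded below by a constant independent of $r_B$ and $n$, which is what the theorem actually consumes. Without the curvature hypothesis the statement is false: a boundary with an arbitrarily tight concave fold near $x$ admits points $Q\in\M$ with $d(Q',H)/r_B$ large but $|Q - x| = O(r_B)$.
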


This is a basic fact about the smoothness and bounded curvature
of the boundary.

\begin{lemma} Pick $n$ points from $D$. W.h.p, any two points in
  $Support(D)$ with Euclidean distance $ \geq \Omega(1)$
  have nearest neighbor metric of $o(1)$.
\end{lemma}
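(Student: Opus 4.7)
The plan is to establish a uniform covering bound for the sample $P$ on the support of $D$ and then plug this bound into the integral definition of $\dist_N$ along a short path joining the two points. The key observation is that the integrand $\distto_P(\gamma(t))\|\gamma'(t)\|$ is bounded above by (maximum of $\distto_P$ on the support) times (Euclidean speed of $\gamma$), so the length integral is controlled by (covering radius of $P$) times (Euclidean length of $\gamma$); making both factors small enough is the whole game.

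First, I would prove a standard covering lemma: with high probability, $\distto_P(x) \le r_n$ for every $x$ in the support of $D$, where $r_n = O\!\left((\log n / n)^{1/d}\right) = o(1)$. I would fix an $r_n$-net $N$ inside the support of cardinality $|N| = O(r_n^{-d})$; such a net exists because the support is a compact $d$-dimensional manifold. Since $D$ is bounded below by a positive constant on its support, each Euclidean ball of radius $r_n/2$ centered at a point of $N$ has intersection with the support of volume $\Omega(r_n^d)$, and hence receives probability mass $\Omega(r_n^d)$ under $D$. A union bound over $N$ then shows that every net ball contains at least one sample point with high probability, provided $n\cdot r_n^d = \omega(\log n)$, which is satisfied by $r_n = \Theta((\log n/n)^{1/d})$. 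This forces $\distto_P \le r_n$ everywhere on the support.

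Next, I would connect $a$ and $b$ by a short path lying on the manifold. Because the support is a compact, connected smooth manifold with smooth boundary, its intrinsic geodesic diameter is some finite constant $L$, so there is a piecewise-$C_1$ path $\gamma:[0,1]\to \R^d$ from $a$ to $b$ lying on the support and of Euclidean length at most $L$. Plugging $\gamma$ into the definition of the nearest-neighbor geodesic distance gives
\[
    \dist_N(a,b) \;\le\; 4\int_0^1 \distto_P(\gamma(t))\|\gamma'(t)\|\,dt \;\le\; 4\, r_n\, L \;=\; O\!\left((\log n/n)^{1/d}\right) \;=\; o(1),
\]
which is the claim (in fact with a quantitative rate).

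The main technical obstacle is the covering step near the boundary of the support: a ball of radius $r_n/2$ centered at a net point lying within distance $r_n$ of the boundary may have only part of its volume inside the support, so the naive lower bound on probability mass can fail. The smooth-boundary-with-bounded-curvature hypothesis is precisely what guarantees that a constant fraction of such a ball still lies inside the support for all small enough $r_n$, so the $\Omega(r_n^d)$ probability-mass estimate continues to hold uniformly and the union bound goes through. This is the same type of boundary handling that already appears in the main proof of Theorem~\ref{thm:distribution-spanner}; once it is dispensed with, the remaining pieces of the argument above are purely geometric.
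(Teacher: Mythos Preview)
Your argument is correct. The paper does not actually prove this lemma: its entire justification is the sentence ``This is implicit in~\cite{hwang2016}.'' Your covering-radius-plus-bounded-path-length argument is the standard way to derive such a statement and supplies a self-contained proof where the paper defers to an external reference. In fact your proof gives more than is stated: it shows that \emph{every} pair of points in $\operatorname{Support}(D)$ has $\dist_N$-distance $O\big((\log n/n)^{1/d}\big)=o(1)$, with no lower bound on their Euclidean separation needed; the hypothesis $\|a-b\|\ge\Omega(1)$ only matters downstream, where the paper compares $\dist_N(Q,O)=o(1)$ against the squared direct edge cost $\|Q-O\|^2=\Omega(1)$ to conclude that the edge $QO$ is non-critical. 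Your identification of the boundary issue is on point and is exactly the technicality the smooth-boundary assumption is there to handle.
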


This is implicit in~\cite{hwang2016}.
\begin{lemma} For any ball $\B$ with center $O$ and any point $Q'$ on the
  boundary of $B$, let $B_{Q'O}$ be the ball with diameter $Q'O$.
  Let $H$ be any halfspace containing $O$.
  If $d(Q', H) / r_B \leq c$ for some constant
  $c$ possibly depending on the dimension $d$, then
  $\vol(\B_{Q'O} \cap H) \geq \frac{1-c'}{2^d} \vol(\B \cap H)$
  for some
  constant $c'$, where $c'$ goes to $0$ as $c$ goes to $0$.
\end{lemma}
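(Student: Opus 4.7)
The plan is to realize $B_{Q'O}$ as the image of $B$ under the homothety $\phi$ centered at $Q'$ with ratio $1/2$, namely $\phi(x) = Q' + \tfrac{1}{2}(x - Q')$. Since $\|Q' - O\| = r_B$, this map sends $O$ to the midpoint of $Q'O$ and scales radii by $1/2$, so $\phi(B) = B_{Q'O}$. The image $\phi(H)$ is a halfspace parallel to $H$ whose bounding hyperplane is the midpoint-hyperplane between $Q'$ and $\partial H$; in particular $H$ and $\phi(H)$ differ only by a slab of width at most $c r_B / 2$, since $d(Q', \partial H) \leq c r_B$ by hypothesis. Because $\phi$ is affine, $\phi(B \cap H) = B_{Q'O} \cap \phi(H)$, and hence $\vol(B_{Q'O} \cap \phi(H)) = 2^{-d}\vol(B \cap H)$.

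I would then split on which side of $\partial H$ the point $Q'$ lies. If $Q' \in H$ then $\phi(H) \subseteq H$, so $B_{Q'O} \cap H \supseteq \phi(B \cap H)$ and the bound $\vol(B_{Q'O} \cap H) \geq 2^{-d}\vol(B \cap H)$ holds with $c' = 0$. If $Q' \notin H$ then $H \subseteq \phi(H)$ and
\[
  \vol(B_{Q'O} \cap H) \;=\; 2^{-d}\vol(B \cap H) \;-\; \vol\bigl(B_{Q'O} \cap (\phi(H) \setminus H)\bigr),
\]
so it suffices to bound the correction term by a small multiple of $2^{-d}\vol(B \cap H)$. A trivial cross-section estimate shows $\vol(B_{Q'O} \cap (\phi(H)\setminus H)) \leq \omega_{d-1}(r_B/2)^{d-1}\cdot (c r_B / 2)$, a constant times $c\,r_B^d$.

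Because the halfspace $H$ contains the center $O$ of $B$, the intersection $B \cap H$ contains at least a hemisphere of $B$, so $\vol(B \cap H) \geq \tfrac{1}{2}\omega_d r_B^d$. Comparing the slab estimate to $2^{-d}\vol(B \cap H)$ then yields a relative error of the form $c' = O(c\,\omega_{d-1}/\omega_d)$, which depends only on $d$ and vanishes with $c$, giving the claim. The whole argument is essentially elementary once the homothety observation is in place; the only quantitative work is the slab-volume bound, and I do not foresee any serious obstacle beyond careful bookkeeping of the dimensional constants.
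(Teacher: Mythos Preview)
Your proof is correct and follows essentially the same approach as the paper: both use the homothety centered at $Q'$ (the paper phrases it as dilating $B_{Q'O}\cap H$ by a factor of $2$ about $Q'$, the inverse of your $\phi$) to compare $B_{Q'O}\cap H$ with $B\cap H$, handling first the case $Q'\in H$ where the comparison is exact. Your treatment is actually more thorough, since you carry out the slab-volume estimate for the case $Q'\notin H$ explicitly, whereas the paper simply asserts that this case ``follows in a straightforward manner.''
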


\begin{proof} First, let us consider the case where $d(Q' H) =
0$, that is, $Q'$ is contained in halfspace $H'$.  In this case,
  dilating $B_{Q'O} \cap H$ by a factor of $2$ about point $Q'$
  gives a superset of $B\cap H$, as $B_{Q'O}$ maps to $B$ and $H$
  maps to a halfspace strictly containing $H$. In this case,
  $\vol(\B_{Q'O} \cap H) \geq \frac{1}{2^d} \vol(\B \cap H)$ as
  desired.
  The case when
  $d(Q', H)/r_B$ is bounded follows in a straightforward manner.
  \end{proof}

This leads us to our following theorem:
\begin{theorem} For any $n$ point set $P$ picked i.i.d from $D$, consider any
  point $O$. Let $\B$ be the $k$-NN ball of $O$.
  Let $Q \in Support(D)$ be any point outside
  $\B$, and let the closest point to $Q$ in $\B$ be $Q'$. For a
  point $x$ inside $B$ on the boundary of $D$ (assuming such a
  point exists), let $H$ be the tangent halfplane containing the
  center of $\B$.

  Then: either $d(Q', H) /
  r_B \leq c'$ for some constant $c'$ or there exists a constant
  $c$ where $|QO| > c$. Here, $c$ and $c'$ are
  independent of the number of points chosen, and $c'$ can be set
  arbitrarily small.

  In the latter case, w.h.p. $QO$ is not in the edge-squared
  $1$-spanner. In the former case, setting $c'$ to be a very
  small constant $\epsilon$ lets us say:

  \begin{align} \label{eq:vol}
    \vol(\B_{Q'O} \cap H) \geq \frac{1-\epsilon}{2^d} \vol(\B
    \cap H),
  \end{align}
  or equivalently:
  \begin{align} \label{eq:prob}
    & \prob{x\sim D}{x \in \B_{QO} | x \in \B}
  \\
    \label{eq:prob-2}
    \geq & \prob{x\sim D}{x \in \B_{Q'O} | x \in \B}
  \\ \label{eq:prob-3}
    \geq & \frac{1-\eps-o(1)}{2^d}
  \end{align}
\end{theorem}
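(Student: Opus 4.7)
The plan is to prove the stated dichotomy first and then dispatch each case using the three lemmas already established above. To prove the ``either\ldots or'' claim, I would argue by contrapositive: if $d(Q',H)/r_B > c'$, the first lemma of this appendix gives $d(Q,H) \ge c''$ for some constant $c'' > 0$. Since $H$ is a halfspace that contains the center $O$ of $\B$, this forces $|QO| \ge d(Q,H) \ge c''$, so we may take $c = c''$. Thus at least one of the two alternatives always holds.

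For the latter case, where $|QO| \ge c$ for an absolute constant, I would apply the second lemma of this appendix to obtain that, with high probability, the nearest-neighbor geodesic distance between $Q$ and $O$ is $o(1)$. Theorem~\ref{thm:NN} identifies this with $\dist_2(Q,O)$, so $\dist_2(Q,O) = o(1)$, whereas the direct edge $(Q,O)$ has edge-squared weight $|QO|^2 \ge c^2 = \Omega(1)$. Hence $(Q,O)$ is not a critical edge and is omitted from every $1$-spanner of the edge-squared metric.

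For the former case, the volume inequality \eqref{eq:vol} is exactly the conclusion of the third lemma of this appendix applied to $\B$, $Q'$, and $H$, with the small parameter taken to be $\eps$. To pass from volume to probability, I would first observe the containment $\B_{Q'O}\subseteq \B_{QO}$: because $Q'$ is the nearest point of $\B$ to $Q$ and $O$ is the center of $\B$, the point $Q'$ lies on segment $OQ$, so the two balls are coaxial, both have $O$ on their boundary, and $\B_{Q'O}$ has the smaller diameter. A short coordinate calculation along the axis $OQ$ confirms $\B_{Q'O}\subseteq\B_{QO}$, which immediately yields \eqref{eq:prob} from \eqref{eq:prob-2}. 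For the final inequality \eqref{eq:prob-3}, the idea is that the conditional density of $D$ on $\B$ is approximately uniform on $\B\cap\M$, and $\B\cap\M$ is in turn approximated by $\B\cap H$ up to $o(1)$ relative volume. This uses that $D$ is Lipschitz and bounded above and below (so the density varies by only $1\pm o(1)$ across $\B$ once $r_B$ is small) together with the bounded curvature of $\partial\M$ (so $\B\cap\M$ differs from $\B\cap H$ by volume $o(\vol(\B\cap H))$).

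The main obstacle is making this last approximation quantitative: I need $r_B \to 0$ fast enough that both the Lipschitz variation of $D$ and the curvature deviation of $\partial\M$ contribute only $o(1)$ multiplicative errors. Since $k = O(2^d\log n)$, a standard concentration argument on the binomial number of sample points landing in a fixed ball shows that, with high probability, $r_B = \tilde O((k/n)^{1/d}) = o(1)$ uniformly over all choices of $O$, which suffices (after a union bound over the $n$ centers). Once these approximations are in place, chaining them with \eqref{eq:vol} gives the $\frac{1-\eps - o(1)}{2^d}$ bound of \eqref{eq:prob-3}, completing the proof.
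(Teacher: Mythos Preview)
Your proposal is correct and follows essentially the same route as the paper: the dichotomy via the first lemma, the ``far'' case via the second lemma together with Theorem~\ref{thm:NN}, the volume bound via the third lemma, the containment $\B_{Q'O}\subseteq\B_{QO}$, and the passage from volume to probability using $r_B\to 0$, Lipschitz density, and bounded boundary curvature. If anything, you supply more justification than the paper does---in particular your explicit contrapositive for the dichotomy and the coaxial-balls containment argument are only implicit in the paper's text.
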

Expression~\ref{eq:prob-2} $>$ Expression~\ref{eq:prob-3} follows from Equation~\ref{eq:vol}, and
the fact that the radius of the $k$-NN ball goes to $0$ as $n$
gets large, and thus the probability density of sampling $x$ from
$D$ conditioned on $x$ being in $\B$ approaches the uniform
density in $\B \cap Support(D)$. Also, $B \cap H$ approaches $B \cap Support(D)$ as the
radius of $B$ goes to $0$.

Expression~\ref{eq:prob} $>$
Expression~\ref{eq:prob-2} since $\B_{QO} \supset B_{Q'O}$.
(Here, the $k$-NN ball $B$ w.r.t. point $O$ is defined as the ball
centered at $O$ with radius equal to the distance of the $k^{th}$
nearest neighbor to $O$).

Note that the $k-1$ nearest neighbors of $O$, conditioned only on
the radius of $B$, are distributed
equivalently to $k-1$ i.i.d samples of $D$ conditioned on containment
in $\B$. It follows
that for any point $Q$ outside $B$ and in the support of $D$,
where $|QO|< c$:

\begin{align*}
  \prob{P \sim D^k}{QO \text{ is not Gabriel w.r.t. $P$} | Q \not\in B}\\
  \geq 1- \left(1 - \frac{1-\eps-o(1) }{2^d} \right)^k
\end{align*}

Thus, setting $\epsilon = 0.1$ and $ k > O(\log n/2^d) $, and
factoring in the case where $|QO| > c$, then w.h.p.:
\[
  \prob{P \sim D^k}{QO \text{ is not critical w.r.t. $P$} | Q \not\in B}
  \]
Here, we recall that an edge $AB$ is Gabriel with respect to a
point set $P$ if and only if $\B_{AB}$ does not contain any points
in $P$. Note that every non-Gabriel edge is non-critical, where a
critical edge is an edge that must be in the $1$-spanner (as
in Definition~\ref{def:critical}).
 Thus taking the union bound over $Q, O \in P$ gives us that
no edge outside the $k$-NN graph is critical w.h.p, and thus the
$k$-NN graph contains all critical edges and is a $1$-spanner
w.h.p.

This proves Theorem~\ref{thm:distribution-spanner} when the
support of $D$ has the same intrinsic dimension as the ambient
space. If the support of $D$ has  dimension $d < d'$ (where $d'$
is the ambient dimension of the space), simply take coordinate charts from
$D$ onto $\mathbb{R}^d$ and the previous arguments will still
carry through
.
We should note that if no point $x$ inside $B$ on the boundary of
$D$ exists, then we can ignore $H$ and all the steps of
the proof still follow.



  \clearpage
   \bibliographystyle{alpha}
      \bibliography{Bibtex/NNbibtex,Bibtex/bibtex}

\end{document}